\newcommand{\ML}[1]{\todo[inline]{#1}}
\newcommand{\MLadd}[1]{{\color{red} #1}}
\newcommand{\MLdel}[1]{\MLadd{\sout{#1}}}
\newcommand{\TW}[1]{\todo[inline]{#1}}
\newcommand{\TWdel}[1]{\MLadd{\sout{#1}}}
\renewcommand{\ML}[1]{}
\renewcommand{\MLadd}[1]{#1}
\renewcommand{\MLdel}[1]{}
\renewcommand{\TW}[1]{}
\renewcommand{\TWdel}[1]{}
\setlist[itemize]{noitemsep, topsep=0pt}
\setlist[enumerate]{noitemsep, topsep=0pt}
\newcommand{\eg}{\emph{e.g.}}
\newcommand{\ie}{\emph{i.e.}}
\newcommand{\lscott}{\ensuremath{[\![}}
\newcommand{\rscott}{\ensuremath{]\!]}}
\newcommand{\multiset}[1]{\Lbag #1 \Rbag}
\newcommand{\set}[1]{{\{ #1 \}}}
\newcommand{\sem}[1]{\mathbf{#1}}
\DeclareMathOperator*{\bigplus}{\mathlarger{\mathlarger{\mathlarger{+}}}}
\newcommand{\booleans}{\mathbb{B}}
\newcommand{\naturalnumbers}{\mathbb{N}}
\newcommand{\subst}{\sigma}
\newcommand{\interpret}[1]{\ensuremath{\lscott #1\mkern1mu\rscott}}
\newcommand{\multiact}{\alpha}
\newcommand{\nodata}[1]{\underline{#1}}
\newcommand{\sequential}{\mathbin{.}}
\newcommand{\allow}{\nabla}
\newcommand{\communication}{\Gamma}
\newcommand{\comms}{\mathsf{Comm}}
\newcommand{\hide}{\tau}
\newcommand{\actcomm}[1]{\gamma_{#1}}
\newcommand{\acthide}[1]{\theta_{#1}}
\newcommand{\freevars}{\textsf{FV}}
\newcommand{\lts}{\mathcal{L}}
\newcommand{\states}{\mathit{S}}
\newcommand{\events}{\Lambda}
\newcommand{\actions}{\mathit{Act}}
\newcommand{\semactions}{\Omega}
\newcommand{\action}{\omega}
\newcommand{\transitions}{\mathbin{\rightarrow}}
\newcommand{\transition}[1]{\xrightarrow{#1}}
\newcommand{\simpleprocess}{\mathsf{S}}
\newcommand{\linearprocess}{\mathsf{P}}
\newcommand{\names}{PN}
\newcommand{\initval}{\iota}
\newcommand{\actsync}{\mathsf{sync}}
\newcommand{\acttag}{\mathsf{tag}}
\newcommand{\Iind}{K}
\newcommand{\Isub}{J}
\newcommand{\REQ}[1]{\ifx1#1\textsc{SYN}\else\ifx2#1\textsc{IND}\else\ifx3#1\textsc{ORG}\else\ifx4#1\textsc{COM}\fi\fi\fi\fi}
\newcommand{\bisim}{\mathbin{\leftrightarroweq}}
\newcommand{\related}{\mathbin{R}}
\newcommand{\boolsort}{B}
\newcommand{\natsort}{N}
\newcommand{\false}{\textsf{false}}
\newcommand{\true}{\textsf{true}}
\newcommand{\actcount}{\textsf{count}}
\newcommand{\acttoggle}{\textsf{toggle}}
\newcommand{\procmachine}{\text{Machine}}
\newcommand{\subvector}[2]{{#1}_{|#2}}
\newcommand{\flatten}[1]{\mathsf{Vars}(#1)}
\renewcommand{\vector}[1]{\langle #1 \rangle}
\newcommand{\chatbox}{\textsf{Chatbox}}
\newcommand{\hesselink}{\textsf{Register}}
\newcommand{\WMS}{\textsf{WMS}}
\newcommand{\abp}{\textsf{ABP}}
\author{Maurice Laveaux \and Tim A.C.~Willemse}
\title{Decompositional Minimisation of Monolithic Processes}
\institute{Eindhoven University of Technology, Eindhoven, The Netherlands \\ \texttt{\{m.laveaux, t.a.c.willemse\}@tue.nl}}
\date{}
\begin{document}
\maketitle

\begin{abstract}
	Compositional minimisation can be an effective technique to reduce the state space explosion problem.
  This technique considers a parallel composition of several processes.
  In its simplest form, each sequential process is replaced by an \emph{abstraction}, simpler than the corresponding process while still preserving the property that is checked.
  However, this technique cannot be applied in a setting where parallel composition is first translated to a non-deterministic sequential monolithic process.
  The advantage of this monolithic process is that it facilitates static analysis of global behaviour.
  Therefore, we present a technique that considers a monolithic process with data and decomposes it into two processes where each process defines behaviour for a subset of the parameters of the monolithic process.
  We prove that these processes preserve the properties of the monolithic process under a suitable synchronisation context.
  Moreover, we prove that state invariants can be used to improve its effectiveness.
  Finally, we apply the decomposition technique to several specifications.
\end{abstract}

\section{Introduction}

The mCRL2 language~\cite{GrooteM2014} is a process algebra that can be used to specify the behaviour of communicating processes with data.
The corresponding mCRL2 toolset~\cite{BunteGKLNVWWW19} translates the parallel composition and action synchronisation present in the process specification to an equivalent (non-deterministic sequential) monolithic process.
The advantages of this translation are that the design of further static analysis techniques and the implementation of state space exploration can be greatly simplified.
However, the static analysis techniques available at the moment are not always strong enough to mitigate the state space explosion problem for this monolithic process even though its state space can often be minimised modulo some equivalence relation after state space exploration.

In the literature there are several promising techniques to reduce this problem and one of them is \emph{compositional minimisation}.
The general idea is that the state space explosion often occurs due to all the possible interleaving of several processes in a parallel composition.
To reduce the interleaving, in compositional minimisation the state space of each sequential process, referred to as a component, is replaced by an \emph{abstraction}, simpler than the corresponding state space such that their composition preserves the property that is checked~\cite{TaiK93:incremental,TaiK93:hierarchy}.

This naive approach is not always useful, because the size of the state spaces belonging to individual components summed together might exceed the size of the whole state space~\cite{GaravelLM18}.
In particular, the state space can become infinitely large for components that rely on synchronisation to bound their behaviour.
This can be avoided by specifying or generating \emph{interface constraints} (also known as \emph{environmental constraints} or \emph{context constraints}) leading to a \emph{semantic compositional minimisation}~\cite{GrafSL96,CheungK96}.
Furthermore, the order in which intermediate components are explored, minimised and subsequently composed heavily influences the size of the intermediate state spaces.
There are heuristics for these problems that can be very effective in practice, as shown by the CADP~\cite{GaravelLMS13} (Construction and Analysis of Distributed Processes) toolset.

Unfortunately, in our context where parallel composition is removed by a translation step the aforementioned compositional techniques, which rely on the user-defined parallel composition and the (sequential) processes, are not applicable.
In this paper we define a decomposition technique of a monolithic process based on a partitioning of its data parameters that results in two components, which we refer to as a \emph{cleave}, and show that it is correct.
Furthermore, we show that \emph{state invariants} can be used to retain more global information in both components to improve its effectiveness.
Finally, we perform a case study to evaluate the decomposition technique and the advantage of state invariants in practice.

The advantages of decomposing the monolithic process are the same as for compositional minimisation; in that by minimising the state spaces of intermediate components the composed state space can be immediately smaller than the state space obtained by exploring the monolithic process.
Indeed, the case studies on which we report support both observations.
Furthermore, state space exploration relies on the evaluation of data expressions of the higher-level specification language and that can be costly, whereas, composition can be computed without evaluating expressions.
An advantage of the decomposition technique over compositional minimisation is that constraints resulting from the synchronisation of these processes can be used when deriving the components.
Another advantage is that the components resulting from the decomposition are not restricted to the user-defined processes present in the specification, which could yield a more optimal composition.

\paragraph{Related Work.}
Several different techniques are related to this type of decomposition.
Most notably, the work on decomposing of petri nets into a set of automata~\cite{BouvierGL20} also aims to speed up state space exploration by means of decomposition.
The work on functional decomposition~\cite{BrinksmaLB93} describes a technique to decompose a specification based on a partitioning of the action labels instead of a partitioning of the data parameters.
In~\cite{JongmansCP16} it was shown how this type of decomposition can be achieved for mCRL2 processes.
Furthermore, a decomposition technique was used in~\cite{GrooteM92} to improve the efficiency of equivalence checking.
However, that work considers processes that were already in a parallel composition and further decomposes them based on the actions that occur in each component.

\paragraph{Outline.} 
In Section~\ref{section:preliminaries} the syntax and semantics of the considered process algebra are defined.
The decomposition problem is defined in Section~\ref{section:decomposition} and the cleave technique is presented.
In Section~\ref{section:state_invariant} the cleave technique is improved with state invariants.
In Section~\ref{section:casestudy} a case study is presented to illustrate the effectiveness of the decomposition technique in practice.
Finally, a conclusion and future work is presented in Section~\ref{section:conclusion}.

\section{Preliminaries}\label{section:preliminaries}

We assume the existence of an abstract data theory that describes data sorts.
Each sort $D$ has an associated non-empty semantic domain denoted by $\mathbb{D}$.
The existence of sorts $\boolsort$ and $\natsort$ with their associated Boolean ($\booleans$) and natural number ($\naturalnumbers$) semantic domains respectively, with standard operators is assumed.
Furthermore, we assume the existence of an infinite set of \emph{sorted variables}.
We use $e : D$ to indicate that $e$ is an expression (or variable) of sort $D$.
We use $\freevars(e)$ to denote the set of free variables of an expression $e$.
A variable that is not free is called \emph{bound}.
An expression $e$ is \emph{closed} iff $\freevars(e) = \emptyset$.

We use an \emph{interpretation} function, denoted by $\interpret{\ldots}$, which maps syntactic objects to values within their corresponding semantic domain.
We assume that for any closed expression $e$ of a sort described by the data theory that $\interpret{e}$ is already defined.
We typically use boldface for semantic objects to differentiate them from syntax, \eg, the semantic object associated with the expression $1 + 1$ is $\textbf{2}$.
For most operators we use the same symbol in both syntactic and semantic domains.
However, we denote \emph{data equivalence} by $e \approx f$, which is $\true$ iff $\interpret{e} = \interpret{f}$.

We use the following notation for vectors.
Given a \emph{vector} $\vec{d} = \vector{d_0, \ldots, d_n}$ of length $n + 1$.
Two vectors are equivalent, denoted by $\vector{d_0, \ldots, d_n} \approx \vector{e_0, \ldots, e_n}$, iff their elements are \emph{pairwise equivalent}, \ie, $d_i \approx e_i$ for all $0 \leq i \leq n$.
Given a vector $\vector{d_0, \ldots, d_n}$ and a subset $I \subseteq \naturalnumbers$, we define the \emph{projection}, denoted by $\subvector{\vector{d_0, \ldots, d_n}}{I}$, as the vector $\vector{d_{i_0}, \ldots, d_{i_l}}$ for the largest $l \in \naturalnumbers$ such that $i_0 < i_1 < \ldots < i_l \leq n$ and $i_k \in I$ for $0 \leq k \leq l$.
We write $\vec{d} : \vec{D}$ for a vector of $n + 1$ variables $d_0 : D_0, \ldots, d_n : D_n$ and denote the projection for a subset of indices $I \subseteq \naturalnumbers$ by $\subvector{\vec{d}}{I} : \subvector{\vec{D}}{I}$.
Finally, we define $\flatten{\vec{d}} = \{d_0, \ldots, d_n\}$.

Given a set $A$ we consider any total function $A \rightarrow \naturalnumbers$ to denote a \emph{multi-set}.
Let $m, m' : A \rightarrow \naturalnumbers$.
\emph{Inclusion}, denoted by $m \subseteq m'$, is defined as $m \subseteq m'$ iff for all elements $a \in A$ it holds that $m(a) \leq m'(a)$.
Furthermore, we define the binary operator $m + m'$ as the pointwise addition: $\forall a \in A : (m + m')(a) = m(a) + m'(a)$. 
Similarly, we define the dual operator $m - m'$ such that $\forall a \in A : (m - m')(a) = \text{max}(m(a) - m'(a), 0)$.
Given an element $a \in A$ we refer to $m(a)$ as the \emph{multiplicity} of $a$.
We use the notation $\multiset{\ldots}$ for a multi-set where the multiplicity of each element is either written next to it or omitted when it is one, \eg, $\multiset{a : 2, b}$ has elements $a$ and $b$ with multiplicity two and one respectively (and all other elements have multiplicity zero).

\subsection{Labelled Transition Systems}

Let $\events$ be the set of (sorted) action labels.
We use $D_a$ to indicate the sort of action label $a \in \events$.

\begin{definition}
  \emph{Multi-actions} are defined as follows:
  \begin{equation*}
    \multiact ~::=~ \tau ~\mid~ a(e) ~\mid~ \multiact | \multiact
  \end{equation*}
  Where $\tau$ denotes the \emph{invisible} action and $a \in \events$ is an action label with an expression $e$ of sort $D_a$.
  Finally, $\multiact | \multiact$ denotes the simultaneous occurrence of two (multi-)actions.
  
  The set of all multi-sets over $\set{a(\sem{e}) \mid a \in \events, \mathbf{e} \in \mathbb{D}_a}$ is denoted $\semactions$.
 	The semantics of a multi-action $\multiact$, denoted by $\interpret{\multiact}$, is an element of $\semactions$ and defined inductively as follows: $\interpret{\tau} = \multiset{}$, $\interpret{a(e)} = \multiset{a(\interpret{e})}$ and $\interpret{\alpha | \beta} = \interpret{\alpha} + \interpret{\beta}$.
\end{definition}

We assume the existence of a singleton sort $\bot$ and omit the parenthesis and expression of action labels of sort $\bot$ to provide a uniform treatment of actions with data and actions without relevant data parameters.

\begin{definition}
  A labelled transition system with multi-actions, abbreviated LTS, is a tuple $\lts = (\states, s_0, \actions, \transitions)$ where $\states$ is a set of states; $s_0 \in S$ is an initial state; $\actions \subseteq \semactions$ and $\transitions\,\subseteq \states \times \actions \times \states$ is a labelled transition relation.
\end{definition}

We typically use $\action$ to denote an element of $\semactions$ and we write $s \transition{\action} t$ whenever $(s, \action, t) \in \transitions$.
We depict LTSs as edge-labelled directed graphs, where vertices represent states and the labelled edges between vertices represent the transitions. 
An incoming arrow with no starting state and no multi-action indicates the initial state.
An example of an LTS that models the behaviour of a machine is depicted in Figure~\ref{figure:delayedmachine}.

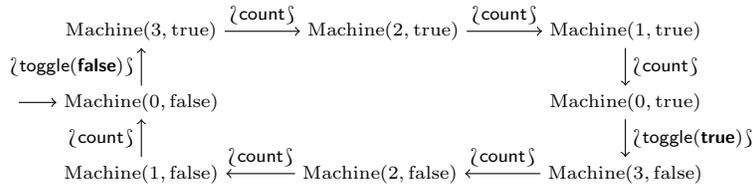
\begin{figure}
	\begin{center}
	\begin{tikzpicture}[->]
 		\scriptsize
	  \node (s0) {$\procmachine(0, \text{false})$};
	  \node (init) [left=5mm of s0] {};
	  \draw (init) edge (s0);
	      
	  \node (s1) [above=.5cm of s0] {$\procmachine(3, \text{true})$};
	  \node (s2) [right=of s1] {$\procmachine(2, \text{true})$};
	  \node (s3) [right=of s2] {$\procmachine(1, \text{true})$};
	  \node (s4) [below=.5cm of s3] {$\procmachine(0, \text{true})$};
	  \node (s5) [below=.5cm of s4] {$\procmachine(3, \text{false})$};
	  \node (s6) [left=of s5] {$\procmachine(2, \text{false})$};
	  \node (s7) [below=.5cm of s0] {$\procmachine(1, \text{false})$};

	  \draw (s0) edge node[left] {$\multiset{\acttoggle(\textbf{\false})}$} (s1);	  
	  \draw (s1) edge node[above] {$\multiset{\actcount}$} (s2);
	  \draw (s2) edge node[above] {$\multiset{\actcount}$} (s3);
	  \draw (s3) edge node[right] {$\multiset{\actcount}$} (s4);
	  \draw (s4) edge node[right] {$\multiset{\acttoggle(\textbf{\true})}$} (s5);
	  \draw (s5) edge node[above] {$\multiset{\actcount}$} (s6);
	  \draw (s6) edge node[above] {$\multiset{\actcount}$} (s7);
	  \draw (s7) edge node[left] {$\multiset{\actcount}$} (s0);  
	\end{tikzpicture}	
	\end{center}
  \caption{Example LTS for the behaviour of a machine.}\label{figure:delayedmachine}
\end{figure}

We recall the well-known strong bisimulation equivalence relation on LTSs~\cite{Milner83}.

\begin{definition}
  Let $\lts_i = (\states_i, s_i, \actions_i, \transitions_i)$ for $i \in \{1,2\}$ be two LTSs.
  A binary relation $R \subseteq \states_1 \times \states_2$ is a \emph{strong bisimulation relation} iff for all $s \related t$:
  \begin{itemize}
 		\item if $s \transition{\action}_1 s'$ then there is a state $t' \in \states_2$ such that $t \transition{\action}_2 t'$ and $s' \related t'$, and
 		
 		\item if $t \transition{\action}_2 t'$ then there is a state $s' \in \states_1$ such that $s \transition{\action}_1 s'$ and $s' \related t'$.
  \end{itemize}

  Two states $s$ and $t$ are \emph{strongly bisimilar}, denoted by $s \bisim t$, iff there is a strong bisimulation relation $R$ such that $s \related t$.
  LTSs $\lts_1$ and $\lts_2$ are strongly bisimilar, denoted by $\lts_1 \bisim \lts_2$, iff $s_1 \bisim s_2$.  
\end{definition}

\subsection{Linear Process Equations}

The (non-deterministic sequential) monolithic processes that we consider are defined by a number of \emph{condition-action-effect} statements; which are called \emph{summands}.
Each summand symbolically represents a partial transition relation between the current and the next state for a multi-set of action labels.
Let $\names$ be a set of process \emph{names}.

\begin{definition}
  A \emph{linear process equation} (LPE) is an equation of the form:
  \[P(d : D) = \sum_{e_0 : E_0} c_0 \rightarrow \multiact_0 \sequential P(g_0) 
  + ~\ldots~
  + \sum_{e_n : E_n} c_n \rightarrow \multiact_n \sequential P(g_n)
  \]
  Where $P \in \names$ is the process \emph{name} and $d$ is the process parameter.
  Furthermore, for $0 \leq i \leq n$ it holds that:
  \begin{itemize}  
    \item $E_i$ is a sort over which \emph{sum} variable $e_i$ (where $e_i \neq d$) ranges,

    \item $c_i$ is a boolean expression such that $\freevars(c_i) \subseteq \set{d, e_i}$ defining the \emph{enabling condition}, and

    \item $\multiact_i$ is a multi-action $\tau$ or $a^1_i(f^1_i) | \ldots | a^{n_i}_i(f^{n_i}_i)$ such that $a^k_i \in \events$ and $f^k_i$ is an expression of sort $D_{a^k_i}$ such that $\freevars(f^k_i) \subseteq \set{d, e_i}$, for $1 \leq k \leq n_i$, and

    \item $g_i$ is an \emph{update} expression such that $\freevars(g_i) \subseteq \set{d, e_i}$ of sort $D$.
  \end{itemize}
\end{definition}

The $+$ operator represents a non-deterministic choice among its operands, and the sum operator, denoted by $\sum$, expresses a non-deterministic choice for values of the sum variable.
Note that the sum operator acts as a binder for the variable $e_i$.
The sum operator is omitted whenever variable $e_i$ does not occur freely within the condition, action and update expressions.
We use $\bigplus_{i \in I}$ for a finite set of \emph{indices} $I \subseteq \naturalnumbers$ as a shorthand for a number of summands.

We often consider LPEs where the parameter sort $D$ represents a \emph{vector} of a length $n + 1$; in that case we write $d_0 : D_0, \ldots, d_n : D_n$ to indicate that there are $n + 1$ parameters such that each $d_i$ has sort $D_i$ for $0 \leq i \leq n$.
Similarly, we also generalise the action sorts and the sum operator in LPEs, where we permit ourselves to write $a(e_0, \ldots, e_k)$ and $\sum_{e_0:E_0, \ldots, e_l:E_l}$, respectively.

The \emph{operational semantics} of an LPE are defined by a mapping to an LTS.
Given a substitution $\subst = [x_0 \gets \initval_0, \ldots, x_n \gets \initval_n]$ and an expression $e$ we use $\subst(e)$ to denote the expression $e$ where each occurrence of variable $x_i$ is syntactically replaced by the \emph{closed} expression $\initval_i$, for $0 \leq i \leq n$.
We assume the usual principle of substitutivity where for all variables $x$, expressions $f$ and closed expressions $g$ and $h$ it holds that if $g \approx h$ then $[x \gets g](f) \approx [x \gets h](f)$.
Let $\linearprocess$ be the set of symbols $P(\initval)$ such that $P(d : D) = \phi_P$, for any $P \in \names$, is an LPE and $\initval$ is a closed expression of sort $D$.

\begin{definition}\label{def:lpesemantics}
  Let $P(d : D) = \bigplus_{i \in I} \sum_{e_i : E_i} c_i \rightarrow \multiact_i \sequential P(g_i)$ be an LPE and let $\initval : D$ be a closed expression. The semantics of $P(\initval)$, denoted by $\interpret{P(\initval)}$, is an LTS $(\linearprocess, P(\initval), \semactions, \transitions)$ for which the following holds.
  For all values $j \in I$, closed expressions $l : E_j$ and closed expressions $\initval' : D$ such that $\subst = [d \gets \initval', e_j \gets l]$ there is a transition $P(\initval') \xrightarrow{\interpret{\subst(\multiact_j)}} P(\subst(g_j))$ iff $\interpret{\subst(c_j)} = \textbf{\true}$.
\end{definition}

We refer to the reachable part of the LTS that is the interpretation of an LPE with a given closed expression as the \emph{state space} of that LPE.
In the interpretation of an LPE a syntactic substitution is applied on the update expressions to define the reached state.
This means that different closed syntactic expressions which correspond to the same semantic object, \eg, $1 + 1$ and $2$ for our assumed sort $\natsort$, result in different states.
However, such states are always strongly bisimilar as formalised below.

\TW{The proof of the lemma may be omitted for the conference version; also the text following the proof can be condensed/removed}
\begin{lemma}\label{lemma:bisimulation}
	Given an LPE $P(d : D) = \phi_P$ and a closed expression $\initval : D$.
	For all closed expressions $e, e' : D$ such that $\interpret{e \approx e'} = \textbf{\true}$ we have $\interpret{P(e)} \bisim \interpret{P(e')}$.	
\end{lemma}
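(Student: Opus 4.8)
The plan is to exhibit a single explicit strong bisimulation that identifies any two syntactic states built from data-equivalent arguments. First I would observe that, by Definition~\ref{def:lpesemantics}, every interpretation $\interpret{P(\cdot)}$ yields the \emph{same} state set $\linearprocess$, action set $\semactions$, and transition relation $\transitions$; only the initial state varies. Hence $\interpret{P(e)} \bisim \interpret{P(e')}$ reduces to exhibiting a strong bisimulation on the one LTS $(\linearprocess, \cdot, \semactions, \transitions)$ that relates the states $P(e)$ and $P(e')$. Accordingly I would define
\[
R = \set{(P(v), P(v')) \mid v, v' : D \text{ closed and } \interpret{v \approx v'} = \textbf{\true}}
\]
and prove that $R$ is a strong bisimulation. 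Since $\interpret{e \approx e'} = \textbf{\true}$, the pair $(P(e), P(e'))$ lies in $R$, so the lemma follows once $R$ is shown to satisfy the transfer conditions.

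For the transfer property, suppose $(P(v), P(v')) \in R$ and $P(v) \transition{\action} P(w)$. By Definition~\ref{def:lpesemantics} there are an index $j$ and a closed expression $l : E_j$ such that, writing $\subst = [d \gets v, e_j \gets l]$, we have $\interpret{\subst(c_j)} = \textbf{\true}$, $\action = \interpret{\subst(\multiact_j)}$, and $w = \subst(g_j)$. I would answer this transition using the \emph{same} summand $j$ and the \emph{same} sum value $l$ from $P(v')$, setting $\subst' = [d \gets v', e_j \gets l]$. The crux is that $\subst$ and $\subst'$ agree on $e_j$ and differ only by replacing $d$ with the data-equivalent closed expressions $v \approx v'$. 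Invoking the assumed principle of substitutivity then gives $\subst(c_j) \approx \subst'(c_j)$, $\subst(g_j) \approx \subst'(g_j)$, and $\subst(f^k_j) \approx \subst'(f^k_j)$ for each action-argument expression.

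From $\subst(c_j) \approx \subst'(c_j)$ and $\interpret{\subst(c_j)} = \textbf{\true}$ the $j$-th summand is enabled from $P(v')$; from the argument equivalences together with the inductive definition $\interpret{a(e)} = \multiset{a(\interpret{e})}$ and $\interpret{\alpha | \beta} = \interpret{\alpha} + \interpret{\beta}$ I obtain $\interpret{\subst'(\multiact_j)} = \interpret{\subst(\multiact_j)} = \action$; and $\subst(g_j) \approx \subst'(g_j)$ gives $w' = \subst'(g_j)$ with $\interpret{w \approx w'} = \textbf{\true}$, hence $(P(w), P(w')) \in R$. Thus $P(v') \transition{\action} P(w')$ with the reached states related, as required. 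The other direction is symmetric by swapping the roles of $v$ and $v'$, so $R$ is a strong bisimulation and $\interpret{P(e)} \bisim \interpret{P(e')}$.

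I expect the only genuinely delicate point to be bridging the single-variable substitutivity principle (stated for replacing one variable $x$ by data-equivalent closed $g, h$) with the simultaneous substitution $\subst$ appearing in the semantics. I would handle this by factoring out the common part: first apply $[e_j \gets l]$ to $c_j$, $g_j$, and each $f^k_j$, after which (because $\freevars(c_j), \freevars(g_j), \freevars(f^k_j) \subseteq \set{d, e_j}$ and $l$ is closed) the residual expressions have $d$ as their only possible free variable, and then apply substitutivity for the single variable $d$ with witnesses $v \approx v'$. Everything else—the enabledness of the condition, the equality of semantic multi-actions, and the membership of the successor pair in $R$—is then routine bookkeeping.
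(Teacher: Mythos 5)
Your proof is correct and takes essentially the same approach as the paper, which only sketches the argument: the relation relating $P(v)$ to $P(v')$ for data-equivalent closed $v, v'$ is shown to be a strong bisimulation via substitutivity and the LPE semantics. Your elaboration of the details (matching with the same summand and sum value, and factoring the simultaneous substitution so that the single-variable substitutivity principle applies) is exactly the bookkeeping the paper leaves implicit.
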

\begin{proof}
	We can show that the smallest relation $\related$ such that $P(e) \related P(e')$ for all closed expressions $e, e' : D$ such that $e \approx e'$ is a strong bisimulation relation.
	This follows essentially from the principle of substitutivity and the definition of an LPE.
\end{proof}

For any given state space we can therefore consider a \emph{representative} state space where for each state a unique closed expression is chosen that is data equivalent.
In an implementation it is very natural to generate this representative LTS directly, for example by always reducing expressions to the normal form if the data specification is based on a (terminating and confluent) term rewrite system.
In examples we always consider the representative state space.

\begin{example}\label{example:delayedmachine}
	Consider the LPE which models a machine that can be toggled with a delay of three counts.
	Whenever the counter reaches zero, \ie, $n$ is zero, it can be toggled again after which it counts down three times.
	\begin{align*}
		\procmachine(n : \natsort, s : \boolsort) 
	    & = (n > 0) \rightarrow \actcount \sequential \procmachine(n - 1, s) \\
	    & + (n \approx 0) \rightarrow \acttoggle(s) \sequential \procmachine(3, \neg s)
	\end{align*}
 	Note that the sum operator has been omitted, because only parameter $n$ occurs as a free variable in the expressions.
	A representative state space of the machine that is initially off, defined by $\interpret{\procmachine(0, \false)}$, is shown in Figure~\ref{figure:delayedmachine}.
  Initially, $n$ is zero and therefore $\interpret{0 \approx 0}$ is $\textbf{\true}$ and there is a transition labelled $\interpret{\acttoggle(\true)} = \multiset{\acttoggle(\textbf{\true})}$ to the state $\procmachine(3, \true)$.
  The other summand does not result in transitions for state $\procmachine(0, \false)$, because $\interpret{0 > 0}$ is $\textbf{\false}$.
  Similarly, there is an outgoing transition labelled $\interpret{\actcount} = \multiset{\actcount}$ for $\procmachine(3, \true)$, because $\interpret{3 > 0}$ is $\textbf{\true}$. Again, there is no other outgoing transition for state $\procmachine(3, \true)$ as $\interpret{3 \approx 0}$ is $\textbf{\false}$.
  The other transitions are derived in a similar way. 
\end{example}

\ML{Section 2.3 is hard to read. It lacks textual, intuitive explanations.}
\subsection{A Process Algebra of Communicating Linear Process Equations}
\label{sec:clpe}

We define a simple process algebra to express parallelism and interaction of LPEs.
Let $\comms$ be the set of \emph{communication} expressions $a_0 | \ldots | a_n \rightarrow c$ where $a_i, c \in \events$ for $0 \leq i \leq n$ are action labels.

\begin{definition}
  The process algebra is defined as follows:
  \begin{equation*}
    S ::= \communication_{C}(S) ~\mid~ \allow_A(S) ~\mid~ \hide_H(S) ~\mid~ S \parallel S ~\mid~ P(\initval)
  \end{equation*}
  Where $A \subseteq 2^{\events \rightarrow \naturalnumbers}$ is a non-empty finite set of finite multi-sets of action labels, $H \subseteq \events$ is a non-empty finite set of action labels and $C \subseteq \comms$ is a finite set of \emph{communications}.
  Finally, we have $P(\initval) \in \linearprocess$.
\end{definition}

The set $\simpleprocess$ contains all expressions of the process algebra.
The operators describe \emph{communication} ($\communication_{C}$), \emph{action allowing} ($\allow_A$), \emph{action hiding} ($\hide_H$) and \emph{parallel composition} ($\parallel$).
Finally, the elementary objects are the processes, defined as LPEs.

First, we introduce several auxiliary functions on $\semactions$ that are used to define the semantics of expressions in $\simpleprocess$.

\begin{definition}
  Given $\action \in \semactions$ we define $\actcomm{C}$, where $C \subseteq \comms$, as follows:
  \begin{align*}
    \actcomm{\emptyset}(\action) &= \action \\
    \actcomm{C}(\action) &= \actcomm{C \setminus C_1}(\actcomm{C_1}(\action))  \text{ for } C_1 \subset C \\
    \actcomm{\set{a_0 | \ldots | a_n \rightarrow c}}(\action) &=
      \begin{cases}
        \parbox{0.6\columnwidth}{$\multiset{c(\sem{d})} + \actcomm{\{a_0 | \ldots | a_n \rightarrow c\}}(\action - \multiset{a_0(\sem{d}), \ldots, a_n(\sem{d})})$} \\
        		    \quad\,\,\,\, \text{if } \multiset{a_0(\sem{d}), \ldots, a_n(\sem{d})} \subseteq \action \\
        \action \quad \text{otherwise}
      \end{cases}
  \end{align*}
\end{definition}

For this function to be well-defined we require that the labels in the left-hand sides of the communications should not \emph{overlap}.
Furthermore, the action label on the right-hand side must not occur in any \emph{other} left-hand side.
\MLadd{For example $\actcomm{\set{a|b\rightarrow c}}(a|d|b) = c|d$ according to the definition, but $\actcomm{\set{a|b\rightarrow c, a|d\rightarrow c}}(a|d|b)$ and $\actcomm{\set{a|b\rightarrow c,c\rightarrow d}}(a|d|b)$ are not allowed.}

\begin{definition}
  Given $\omega \in \semactions$ we define $\acthide{H}(\action)$, where $H \subseteq \events$ is a set of action labels, such that $\acthide{H}(\omega) = \omega'$ where:
  \begin{equation*}
  	\action'(a(\sem{d})) = 
  	\begin{cases}
  		0 &\text{if } a \in H \\
  		\action(a(\sem{d})) &\text{otherwise}
  	\end{cases}
  \end{equation*}
\end{definition}

Given a multi-action $\multiact$ we define $\nodata{\multiact}$ to obtain the multi-set of action labels, \eg, $\nodata{a(3)|b(5)} = \multiset{a, b}$.
Formally, $\nodata{a(e)} = \multiset{a}$, $\nodata{\tau} = \multiset{}$ and $\nodata{\alpha|\beta} = \nodata{\alpha} + \nodata{\beta}$.
We define $\nodata{\action}$ for $\action \in \semactions$ in a similar way.

Using these functions we can define the semantics, using \emph{structured operational semantics}, as an LTS.
  
\begin{definition}
  The operational semantics of an expression $Q$ of $\simpleprocess$, denoted $\interpret{Q}$, are defined by the corresponding LTS $(\simpleprocess, Q, \semactions, \transitions)$ with its transition relation defined by the rules below and the transition relation given in Definition~\ref{def:lpesemantics} for each expression in $\linearprocess$.
  For any $\action \in \semactions$ and $P, P', Q, Q'$ expressions of $\simpleprocess$:
  
  \begin{center}
  \begin{tabular}{c c}
    $\inferrule*[Left=Com]{P \xrightarrow{\action} P' \\ C \subseteq \comms}{\communication_{C}(P) \xrightarrow{\actcomm{C}(\action)} \communication_{C}(P')}$ &
    \quad\quad\quad $\inferrule*[Left=Allow]{P \xrightarrow{\action} P' \\ A \subseteq 2^{\events \rightarrow \naturalnumbers} \\ \nodata{\action} \in A}{\allow_A(P) \xrightarrow{\action} \allow_A(P')}$ \\[4ex]

    $\inferrule*[Left=Hide]{P \xrightarrow{\action} P' \\ H \subseteq \events}{\hide_H(P) \xrightarrow{\acthide{H}(\action)} \hide_H(P')}$
    & $\inferrule*[Left=Par]{P \xrightarrow{\action} P' \quad Q \xrightarrow{\action'}  Q'}{P \parallel Q \xrightarrow{\action \,+\, \action'} P' \parallel Q'}$ \\[4ex]

    $\inferrule*[Left=ParR]{Q \xrightarrow{\action} Q'}{P \parallel Q \xrightarrow{\action} P \parallel Q'}$
    & $\inferrule*[Left=ParL]{P \xrightarrow{\action} P'}{P \parallel Q \xrightarrow{\action} P' \parallel Q}$  
  \end{tabular}
  \end{center}
\end{definition}

\section{Decomposition}\label{section:decomposition}

We are interested in decomposing an LPE into $n$ LPEs, where the latter are referred to as \emph{components}, such that each component contains a subset of the original parameters.
This decomposition is considered \emph{valid} iff the original state space is strongly bisimilar to the state space of these components under a suitable context.

\begin{definition}\label{def:valid_decomposition}
  Let $P(\vec{d} : \vec{D}) = \phi$ be an LPE and $\vec{\initval} : \vec{D}$ a closed expression.
  The LPEs $P_0(\subvector{\vec{d}}{I_0} : \subvector{\vec{D}}{I_0}) = \phi_0$ to $P_n(\subvector{\vec{d}}{I_n} : \subvector{\vec{D}}{I_n}) = \phi_n$, for indices $I_0, \ldots, I_n \subseteq \naturalnumbers$, are a valid \emph{decomposition} of $P$ and $\vec{\initval}$ under a context $C$ iff:  \begin{equation*}
    \interpret{P(\vec{\initval})} \bisim \interpret{\mathsf{C}[P_0(\subvector{\vec{\initval}}{I_0}) \parallel \ldots \parallel P_n(\subvector{\vec{\initval}}{I_n})]}
  \end{equation*}
  Where $\mathsf{C}[P_0(\subvector{\vec{\initval}}{I_0}) \parallel \ldots \parallel P_n(\subvector{\vec{\initval}}{I_n})]$ is an expression in $\simpleprocess$.
  We refer to the expression $\mathsf{C}[P_0(\subvector{\vec{\initval}}{I_0}) \parallel \ldots \parallel P_n(\subvector{\vec{\initval}}{I_n})]$ as the \emph{composition}.
\end{definition}

\MLadd{In our case the context $\mathsf{C}$ is an expression with operators from $\simpleprocess$ that is used to define the synchronisation between the individual components, see for example the composition expression in Definition~\ref{def:refinedcleavecomposition}.}
The primary benefit of a valid decomposition is that a state space that is equivalent to the original state space can be obtained as follows.
First, the state space of each component is derived separately.
Then the result of the composition expression can be derived from the component state spaces based on the rules of the operational semantics.
Furthermore, the component state spaces can be minimised modulo an equivalence relation that is a congruence with respect to the operators of $\simpleprocess$ before deriving the results of the composition expression.
Strong bisimilarity is known to be a congruence with respect to our operators.
This is referred to as \emph{compositional minimisation}.

For the remainder of this paper we consider a decomposition technique that results in exactly two components.
First, we present an example to show that a valid decomposition can be achieved using the presented process algebra and which shows that not every decomposition is necessarily useful for state space exploration.
This decomposition exploits the fact that $\tau$ represents the empty multi-action, which means that $\multiact | \tau = \multiact$ holds for all multi-actions $\multiact$.

\begin{example}\label{example:naivecleave}
	Consider the LPE of Example~\ref{example:delayedmachine} again.
  For the decomposition we introduce the two components shown below.
	\begin{align*}
		\procmachine_V(n : \natsort)
		    & = \sum_{s : \boolsort} (n > 0) \rightarrow  \actcount | \actsync^0_V(n, s) \sequential \procmachine_V(n - 1) \\
		    & + \sum_{s : \boolsort} (n \approx 0) \rightarrow \tau | \actsync^1_V(n, s) \sequential \procmachine_V(3) \\
 		\procmachine_W(s : \boolsort)
		    & = \sum_{n : \natsort} (n > 0) \rightarrow \tau | \actsync^0_W(n, s) \sequential \procmachine_W(s) \\
		    & + \sum_{n : \natsort} (n \approx 0) \rightarrow \acttoggle(s) | \actsync^1_W(n, s) \sequential \procmachine_W(\neg s)
	\end{align*}
  Each component describes part of the behaviour where the parameter value, either $n$ or $s$, is known.  
  However, the value of the other parameter is unknown.
  To cater for this, we add it as a sum variable.
  Consider the state space of $\procmachine_V(0)$ shown below.
  It describes the behaviour of $M$ where the value of $n$ is known, but at each state it allows both values of $s$.
  \begin{center}
 	\begin{tikzpicture}[->]
 		\scriptsize
	  \node (s0) {$\procmachine_V(0)$};
	  \node (s0init) [left= 0.5cm of s0] {};
	  \node (s1) [above right= of s0] {$\procmachine_V(3)$};
	  \node (s2) [below right= of s1] {$\procmachine_V(2)$};
	  \node (s3) [below right= of s0] {$\procmachine_V(1)$};
 	
 		\draw (s0init) edge (s0);
	  \draw (s0) edge[bend left] node[above left] {$\multiset{\actsync^1_V(\textbf{0, \true})}$} (s1);
	  \draw (s0) edge[bend left=20] node[left=.5cm] {$\multiset{\actsync^1_V(\textbf{0, \false})}$} (s1);
	  
	  \draw (s1) edge[bend left] node[above right] {$\multiset{\actcount,\actsync^0_V(\textbf{3, \true})}$} (s2);	  
	  \draw (s1) edge[bend left=20] node[right=0.5cm] {$\multiset{\actcount,\actsync^0_V(\textbf{3, \false})}$} (s2);
	  
 	  \draw (s2) edge[bend left] node[below right] {$\multiset{\actcount,\actsync^0_V(\textbf{2, \false})}$} (s3);
	  \draw (s2) edge[bend left=20] node[right=.5cm] {$\multiset{\actcount,\actsync^0_V(\textbf{2, \true})}$} (s3);
	  
 	  \draw (s3) edge[bend left] node[below left] {$\multiset{\actcount,\actsync^0_V(\textbf{1, \true})}$} (s0);
 	  \draw (s3) edge[bend left=20] node[left=.5cm] {$\multiset{\actcount,\actsync^0_V(\textbf{1, \false})}$} (s0);
 	\end{tikzpicture}
  \end{center}  
  The synchronisation actions make the values of the parameters, which are chosen non-deterministically for the unknown parameters, visible in the behaviour.
  This can be used to achieve a valid decomposition by enforcing the synchronisation of these actions as follows:
  \begin{align*}
    & \allow_{\{\multiset{\acttoggle}, \multiset{\actcount}\}}( \hide_{\{\actsync^0, \actsync^1\}}( \\ & \quad \communication_{\{\actsync^0_V|\actsync^0_W \rightarrow \actsync^0, \actsync^1_V|\actsync^1_W \rightarrow \actsync^1\}}(\procmachine_V(0) \parallel \procmachine_W(\false))))
  \end{align*}
  
  Unfortunately the state space of $\procmachine_W(\false)$ is infinitely branching from its initial state and it has no finite state space that is strongly bisimilar to it.
  Therefore, the previously described state space construction cannot be applied to this decomposition.
  However, it can be verified that this is a valid decomposition.
\end{example}

For the purpose of state space exploration the effectiveness of a valid decomposition is determined by the size, \ie, the sum of the number of states and transitions, of the state space corresponding to each component.
As a minimum requirement the size of each component state space should be smaller than the size of the original state space.
Furthermore, it would be considered useful whenever the composition of the minimised components is smaller than the original state space.

\ML{Sections 3.1 and 3.2 should be reworked carefully to be made clearer.}
\subsection{Separation Tuples}

Example~\ref{example:naivecleave} hinted at the construction that we use to achieve a valid decomposition.
However, as we have seen this decomposition could not yet be used for the compositional state space construction.
Furthermore, we need to consider the restrictions that the resulting components should satisfy in order to be a valid decomposition in the general case.

To obtain a useful decomposition it can be beneficial to reduce the number of parameters that occur in the synchronisation actions, because these actions become visible as transitions in the state spaces of the individual components.
Furthermore, in some cases we can actually remove the synchronisation for summands completely.
For instance, in the first summand of $\procmachine$ in Example~\ref{example:delayedmachine} we can observe that the value of parameter $s$ remains unchanged and the condition is only an expression containing parameter $n$.
Therefore, we could allow component $\procmachine_V$ to result in a transition labelled with $\actcount$ without a corresponding summand in $\procmachine_W$ that synchronises the values of $s$ and $n$ unnecessarily.
We refer to these kind of summands as \emph{independent} summands.

An independent summand can result in transitions without synchronising with the other component in the composition expression.
However, this might introduce an issue when the action labels in all action expressions are \emph{not} disjoint.
For example, consider an LPE that contains two summands with $a$ as action expression that are independent in different components and another summand with $a|a$ as action expression.
Then $a|a$ would be allowed in the composition expression to ensure that these transitions can occur, but then the independent summands could also result in a simultaneous transition due to rule $\textsf{Par}$, which was not a possibility in the original LPE.
To prevent this issue we introduce a \emph{tag} action label and only allow independent actions with a single tag in the composition.

First, we present several restrictions on the structure of the component LPEs.
In the following definition there is a set of indices $\Iind$ to partition the summands between dependent and \emph{independent} summands.
Furthermore, there is a set of indices $\Isub$ to indicate summands that are present in this component.
Finally, we allow the condition, action and synchronisation expressions to be chosen freely, which can be used to further reduce the amount of parameter synchronisation.
We use indexed sets for these expressions such that the index corresponds to the index of the summand, where for each element we indicate the index by a subscript.

\ML{No intuition what $U$, $K$ and $J$ represent}
\begin{definition}\label{def:derived}
  Let $P(\vec{d} : \vec{D}) = \bigplus_{i \in I} \sum_{e_i : E_i} c_i \rightarrow \multiact_i \sequential P(\vec{g}_i)$ be an LPE.
  Let $(P, U, \Iind, \Isub, c^U, \multiact^U, h^U)$ be a \emph{separation} tuple such that $U \subseteq \naturalnumbers$ is a set of parameter indices and $\Iind \subseteq \Isub \subseteq I$ are two sets of summand indices.
  Furthermore, $c^U, \multiact^U$ and $h^U$ are indexed sets of condition, action and update expressions respectively such that for all $i \in (\Isub \setminus \Iind)$ it holds that $\freevars(c^U_i) \cup \freevars(\multiact^U_i) \cup \freevars(\vec{h^U_i}) \subseteq \flatten{\vec{d}} \cup \{e_i\}$.
  Finally, for all $i \in K$ it holds that $\freevars(c_i) \cup \freevars(\multiact_i) \cup \freevars(\subvector{\vec{g_i}}{U}) \subseteq \flatten{\subvector{\vec{d}}{U}} \cup \{e_i\}$.
   
  The separation tuple induces an LPE, where $U^c = \naturalnumbers \setminus U$, as follows:
  \begin{align*}
    P_{U}(\subvector{\vec{d}}{U} : \subvector{\vec{D}}{U}) = &\bigplus_{i \in (\Isub \setminus \Iind)} \sum_{e_i : E_i, \subvector{\vec{d}}{U^c} : \subvector{\vec{D}}{U^c}} \\ 
    & \,\,\,\quad\quad\quad c^U_i \rightarrow \multiact^U_i|\textsf{sync}^i_U(\vec{h^U_i}) \sequential P_U(\subvector{\vec{g_i}}{U}) \\
  + &~~ \bigplus_{i \in K} \sum_{e_i : E_i} c_i \rightarrow \multiact_i | \textsf{tag} \sequential P_U(\subvector{\vec{g_i}}{U})
  \end{align*}
  We assume that action labels $\actsync^i_V$ and $\actsync^i_W$, for any $i \in I$, and label $\acttag$ does not occur in $\multiact_j$, for any $j \in I$, to ensure that these action labels are fresh.
\end{definition}

The composition expression for these components is a generalisation of the composition expression presented in Example~\ref{example:naivecleave} where $\acttag$ action labels are hidden at the highest level.

\begin{definition}\label{def:refinedcleavecomposition}
  Let $P(\vec{d} : \vec{D}) = \bigplus_{i \in I} \sum_{e_i : E_i} c_i \rightarrow \multiact_i \sequential P(\vec{g}_i)$ be an LPE and $(P, V, \Iind_V, \Isub_V, c^V, \multiact^V, h^V)$ and $(P, W, \Iind_W, \Isub_W, c^W, \multiact^W, h^W)$ be separation tuples.
  Let $P_V(\subvector{\vec{d}}{V} : \subvector{\vec{D}}{V}) = \phi_V$ and $P_W(\subvector{\vec{d}}{W} : \subvector{\vec{D}}{W}) = \phi_W$ be the induced LPEs according to Definition~\ref{def:derived}.
  Let $\initval : \vec{D}$ be a closed expression.
  Then the composition expression is defined as:
  \begin{align*}
    &\hide_{\{\acttag\}} (
    	\allow_{\{\nodata{\multiact_i} \mid i \in I\} \cup \{\nodata{\multiact_i|\acttag} \mid i \in \Iind_V\} \mid i \in \Iind_W\}\}}( \\
    		&\hide_{\{\actsync^i \mid i \in I\}} (\communication_{\{ \actsync^i_V \mid \actsync^i_W \rightarrow \actsync^i \mid i \in I\}} (P_V(\subvector{\vec{\initval}}{V}) \parallel P_W(\subvector{\vec{\initval}}{W})))
    	  )
      )
  \end{align*}
\end{definition}

We revisit Example~\ref{example:delayedmachine} to illustrate a valid decomposition that fits the given structure and unlike Example~\ref{example:naivecleave} allows for reduced synchronisation.

\begin{example}\label{example:refinedcleave}
	Consider the LPE presented in Example~\ref{example:delayedmachine} again.	
	We obtain components $P_V$ and $P_W$ for the separation tuples $(P, V, \{0\}, \{0,1\}, \{(n > 0)_0, (n \approx 0)_1\}, \{(\actcount|\acttag)_0, (\tau)_1\}, \{\vector{}_1\})$ and $(P, W, \emptyset, \{1\}, \{\true_1\}, \{\acttoggle(s)_1\}, \{\vector{}_1\})$ respectively.
  \begin{align*}
    P_V(n : \natsort) &= (n > 0) \rightarrow \actcount|\acttag \sequential P_V(n - 1) \\
      &+ (n \approx 0) \rightarrow \actsync^1_V \sequential P_V(3) \\
    P_W(s : \boolsort) &= \true \rightarrow \acttoggle(s) | \actsync^1_W \sequential P_W(\neg s)
  \end{align*}
  The state spaces of components $P_V(0)$ and $P_W(\false)$ are shown below.
  \begin{center}
	\begin{tikzpicture}[->]
		\scriptsize
	  \node (s0) {$P_V(0)$};
	  \node (s0init) [left= 0.5cm of s0] {};
	  \node (s1) [above right= 0.5cm of s0] {$P_V(3)$};
	  \node (s2) [below right= 0.5cm of s1] {$P_V(2)$};
	  \node (s3) [below right= 0.5cm of s0] {$P_V(1)$};
	
		\draw (s0init) edge (s0);
	  \draw (s0) edge[bend left] node[above left] {$\multiset{\actsync^1_V}$} (s1);	  
	  \draw (s1) edge[bend left] node[above right] {$\multiset{\actcount,\acttag}$} (s2);
 	  \draw (s2) edge[bend left] node[below right] {$\multiset{\actcount,\acttag}$} (s3);  
 	  \draw (s3) edge[bend left] node[below left] {$\multiset{\actcount,\acttag}$} (s0);
 	  
	  \node (t0) [right=1cm of s2] {$P_W(\false)$};
	  \node (t0init) [left= 0.5cm of t0] {};
	  \node (t1) [right= of t0] {$P_W(\true)$};
  	
  	\draw (t0init) edge (t0);
	  \draw (t0) edge[bend left] node[above] {$\multiset{\acttoggle(\textbf{\false}), \actsync^1_W}$} (t1);	  
	  \draw (t1) edge[bend left] node[below] {$\multiset{\acttoggle(\textbf{\true}), \actsync^1_W}$} (t0);
	\end{tikzpicture}
  \end{center}
  We obtain the following composition according to Definition~\ref{def:refinedcleavecomposition}:
  \begin{align*}
  	\hide_{\{\acttag\}}(\allow_{\{\multiset{\acttoggle},\multiset{\actcount},\multiset{\actcount,\acttag}\}} (\hide_{\{\actsync^0,\actsync^1\}} (\\
  	\communication_{
  		\{\actsync^0_V|\actsync^0_W \rightarrow \actsync^0, \actsync^1_V|\actsync^1_W \rightarrow \actsync^1\}
  		} (P_V(0) \parallel P_W(\false)))))
  \end{align*}
  One can verify that the state space of this expression is strongly bisimilar to the state space of $\procmachine(0, \false)$ shown in Example~\ref{example:delayedmachine}.
  As shown above the state space of $P_V(0)$ has four states and transitions, and the state space of $P_W(\false)$ has two states and transitions, which are both smaller than the original state space.
  Their composition is exactly the same size as the original state space, where no further minimisation can be achieved.
\end{example}

\ML{One reviewer did not understand that decomposition can achieve more then just identifying independent summands.}
\subsection{Cleave Correctness Criteria}

Not every decomposition which satisfies Definition~\ref{def:refinedcleavecomposition} yields a valid decomposition.
For example, replacing the condition expression in Example~\ref{example:refinedcleave} of the summand in $P_W$ by $\false$ would not result in a valid decomposition.
Therefore, we need to consider restrictions that should be imposed on the components such that the result of the composition expression defined in Definition~\ref{def:refinedcleavecomposition} is \emph{always} a valid decomposition, which means that the composition should be strongly bisimilar to the original state space for the given initial values.
We essentially employ restrictions to preserve a relation between each original state and the two states of the components where the parameters have the same value.

Let us consider any decomposition according to Definition~\ref{def:refinedcleavecomposition}.
We provide an intuition for each of the requirements that is stated in Definition~\ref{def:requirements}.
First of all, we need to ensure that every summand of the LPE either occurs as an independent summand in one of the components or it occurs in both components.
This is stated by requirement \REQ{1}.
Furthermore, this requirement ensures that the condition, action and update expressions for summands with indices in $(\Isub_V \cap \Isub_W)$ are defined in requirements \REQ{3} and \REQ{4}.

In Definition~\ref{def:refinedcleavecomposition} we see that summands with indices in $K$ have condition, action and update expressions that only have the parameters in $\subvector{\vec{d}}{V}$ and the summand variable as free variables.
Furthermore, to consider a summand as \emph{independent} we require that the update expressions in $\subvector{\vec{d}}{W}$ do not modify the parameters.
This is essential as otherwise these updates would be omitted, which would almost certainly not yield a valid decomposition.
Therefore, we introduce requirement \REQ{2} to ensure that elements of $K_V$ and $K_W$ are independent summands.

For the summands with an index in $(\Isub_V \cap \Isub_W)$ we need to consider when the synchronisation should occur and when it is not allowed.
Whether it is allowed depends on the outgoing transitions that occur due to the corresponding original summand, \ie, the summand in the original LPE with the same index.
If there is an outgoing transition due to such a summand then both components must result in a transition where the synchronisation action can communicate.
This means that their conditions must be true, the synchronisation vectors ($\vec{h}^V$ and $\vec{h}^W$) must have equal values and the resulting action label must be equal to the original action label
These are exactly the requirements stated by \REQ{3}.

Similarly, whenever both components have outgoing transitions that could synchronise then there must be a corresponding outgoing transition in the original state space.
Here, the complication is that the values for the sum variables, which are both the original sum variable and the values of the other parameters, can be chosen non-deterministically, and therefore requirement \REQ{4} ensures that for any choice of these variables there is a corresponding outgoing transition in the original state.  

\begin{definition}\label{def:requirements}
  Let $P(\vec{d} : \vec{D}) = \bigplus_{i \in I} \sum_{e_i : E_i} c_i \rightarrow \multiact_i \sequential P(\vec{g}_i)$ be an LPE and $(P, V, \Iind_V, \Isub_V, c^V, \multiact^V, h^V)$ and $(P, W, \Iind_W, \Isub_W, c^W, \multiact^W, h^W)$ be separation tuples as defined in Definition~\ref{def:derived}.
	The two separation tuples are a \emph{cleave} of $P$ iff the following requirements hold.
	
	\begin{enumerate}		
		\item[\REQ{1}] $\Isub_V = I \setminus \Iind_W$ and $\Isub_W = I \setminus \Iind_V$.
		
		\item[\REQ{2}] For all $r \in \Iind_V$ it holds that $\subvector{\vec{g_r}}{W} = \subvector{\vec{d}}{W}$, and for all $r \in \Iind_W$ it holds that $\subvector{\vec{g_r}}{V} = \subvector{\vec{d}}{V}$.
		
		\item[\REQ{3}] For all $r \in (\Isub_V \cap \Isub_W)$, closed expressions $\vec{\initval} : \vec{D}$ and $l : E_r$ such that $\subst = [\vec{d} \gets \vec{\initval}, e_r \gets l]$ it holds that if $\interpret{\subst(c_r)}$ then:		
		  \begin{itemize}
		    \item $\interpret{\subst(c^V_r \land c^W_r)}$, and
		
		    \item $\interpret{\subst(\vec{h^V_r} \approx \vec{h^W_r})}$, and
		
		    \item $\interpret{\subst(\multiact^V_r | \multiact^W_r \approx \multiact_r)}$.
		  \end{itemize}
		  
		\item[\REQ{4}] For all $r \in (\Isub_V \cap \Isub_W)$, closed expressions $\vec{\initval}, \vec{\initval'}, \vec{\initval''} : \vec{D}$ and $l, l' : E_r$ such that $\subst = [\subvector{\vec{d}}{V} \gets \subvector{\vec{\initval}}{V}, \subvector{\vec{d}}{W \setminus V} \gets \subvector{\vec{\initval'}}{W \setminus V}, e_r \gets l]$ and $\sigma' = [\subvector{\vec{d}}{W} \gets \subvector{\vec{\initval}}{W}, \subvector{\vec{d}}{V \setminus W} \gets \subvector{\vec{\initval''}}{V \setminus W}, e_r \gets l']$ the following holds.
		  If both $\interpret{\subst(c^V_r) \land \sigma'(c^W_r)}$ and $\interpret{\subst(\vec{h^V_r}) \approx \sigma'(\vec{h^W_r})}$ hold then there is a closed expression $l'' : E_r$ such that for the substitution $\rho = [\vec{d} \gets \vec{\initval}, e_r \gets l'']$ it holds that:      
      
      \begin{itemize}
        \item $\interpret{\rho(c_r)}$, and
	
        \item $\interpret{\subst(\multiact^V_r)|\subst'(\multiact^W_r)} = \interpret{\rho(\multiact_r)}$, and
	
        \item $\interpret{\subst(\subvector{\vec{g_r}}{V})} = \interpret{\rho(\subvector{\vec{g_r}}{V})}$, and
        
        \item $\interpret{\subst'(\subvector{\vec{g_r}}{W})} = \interpret{\rho(\subvector{\vec{g_r}}{W})}$.
      \end{itemize}
	\end{enumerate}
\end{definition}

Note that requirements \REQ{3} and \REQ{4} of Definition~\ref{def:requirements} are stated on the semantics of the condition, action and update expressions.
In practice, we would need to effectively approximate these correctness requirements using static analysis.
For example the requirements on the condition expression can be approximated by using the syntactic conjunctions that are present in it.
However, how precise and efficient this static analysis can be is left as future work. 
The correctness of the cleave is established by the following theorem.

\begin{restatable}{theorem}{cleavecorrectness}\label{theorem:refinedcleave}
	The composition expression defined in Definition~\ref{def:refinedcleavecomposition} where the two separation tuples are a cleave according to Definition~\ref{def:requirements} is a valid decomposition as defined in Definition~\ref{def:valid_decomposition}.
\end{restatable}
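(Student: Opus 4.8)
The plan is to exhibit an explicit strong bisimulation relation and verify the two transfer conditions of the definition of strong bisimulation, appealing to requirements \REQ{1}--\REQ{4} exactly where the corresponding transitions are generated. Writing $\mathsf{C}[\,\cdot\,]$ for the context of Definition~\ref{def:refinedcleavecomposition} (the nested $\hide$/$\allow$/$\hide$/$\communication$ wrapper), I would take
$$R = \{(P(\vec{a}),\ \mathsf{C}[P_V(\subvector{\vec{a}}{V}) \parallel P_W(\subvector{\vec{a}}{W})]) \mid \vec{a} : \vec{D}\text{ closed}\},$$
relate the two initial states (so that the theorem follows once $R$ is shown to be a strong bisimulation), and then check both directions. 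Throughout I would use \REQ{1} to note that every summand index $i \in I$ falls into exactly one of the three classes $\Iind_V$, $\Iind_W$, or $\Isub_V \cap \Isub_W$, since $\Iind_V \subseteq \Isub_V = I \setminus \Iind_W$ forces $\Iind_V \cap \Iind_W = \emptyset$, and an index lying in neither independent set lies in $\Isub_V \cap \Isub_W$. This case split drives the entire argument.

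For the forward direction, suppose $P(\vec{a}) \transition{\action} P(\subst(\vec{g}_j))$ arises from summand $j$ with a sum valuation $\subst = [\vec{d} \gets \vec{a}, e_j \gets l]$ and $\interpret{\subst(c_j)} = \true$. If $j \in \Iind_V$ then $P_V$ offers the matching independent summand (its condition, action and $V$-update only mention $\subvector{\vec{d}}{V}$ and $e_j$, by Definition~\ref{def:derived}), so $\mathsf{C}$ lets $P_V$ move alone via rule $\textsf{ParL}$; the label $\multiact_j | \acttag$ is admitted by $\allow$, the tag is removed by the outer $\hide$, and by \REQ{2} the $W$-parameters are unchanged, so the target is again $R$-related. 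The case $j \in \Iind_W$ is symmetric. If $j \in \Isub_V \cap \Isub_W$, then both components fire summand $j$, each instantiating its summed-over missing parameters with the values taken from $\vec{a}$ and $e_j$ with $l$; \REQ{3} guarantees that $c^V_j$ and $c^W_j$ both hold, that the synchronisation vectors agree so that $\communication$ fuses $\actsync^j_V$ and $\actsync^j_W$ into the hidden $\actsync^j$, and that $\multiact^V_j | \multiact^W_j$ reduces to $\multiact_j$, which $\allow$ admits. The resulting $V$- and $W$-updates reassemble into $\subst(\vec{g}_j)$, keeping the pair in $R$.

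For the backward direction I would first argue, by inspecting $\communication$ and $\allow$ in Definition~\ref{def:refinedcleavecomposition}, that the only composite transitions surviving the allow filter are a single independent move tagged by $\acttag$, or a genuine synchronisation of summand $j$ in $P_V$ with the \emph{same} summand $j$ in $P_W$ (an uncommunicated $\actsync^j_V$ or a mismatch of indices leaves a raw synchronisation label in the multi-action, which $\allow$ blocks). Independent moves map back to the original summand directly, using the free-variable restriction for $K$ in Definition~\ref{def:derived} and \REQ{2} to see that the unseen parameters are irrelevant to the step. The synchronised case is where \REQ{4} is used: the two components may have chosen their summed-over parameters independently, so I would package $P_V$'s choice as $\subst$ and $P_W$'s choice as $\sigma'$ exactly as in \REQ{4}, the shared real values $\subvector{\vec{a}}{V}$ and $\subvector{\vec{a}}{W}$ being projections of the common vector $\vec{a}$. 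The firing of the synchronisation means precisely that $\interpret{\subst(c^V_j) \land \sigma'(c^W_j)}$ and $\interpret{\subst(\vec{h}^V_j) \approx \sigma'(\vec{h}^W_j)}$ hold, so \REQ{4} yields a sum value $l''$ and substitution $\rho = [\vec{d} \gets \vec{a}, e_j \gets l'']$ furnishing a matching original transition whose label equals $\action$ and whose target agrees with the component targets on $V$ and on $W$; hence the pair stays in $R$.

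The main obstacle is this synchronised backward case: reconciling the two independently guessed valuations into one coherent original transition is exactly what \REQ{4} is engineered for, but setting up $\subst$, $\sigma'$ and $\rho$ correctly and threading the multi-action through the four-layer context $\mathsf{C}$ (tracking how $\actcomm{\cdot}$, $\acthide{\cdot}$ and $\allow$ transform the label) demands careful bookkeeping. A second, orthogonal subtlety is that \REQ{3} and \REQ{4} assert only \emph{semantic} equalities ($\approx$ and $\interpret{\cdot}=\interpret{\cdot}$), whereas the transition relation of Definition~\ref{def:lpesemantics} records \emph{syntactic} update expressions in the target states. I would resolve this by either working with representative state spaces or, equivalently, closing $R$ under data equivalence and invoking Lemma~\ref{lemma:bisimulation} to absorb the discrepancy between data-equivalent but syntactically distinct successor states.
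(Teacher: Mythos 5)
Your proposal is correct and follows essentially the same route as the paper's proof: the same relation $R$ on matching parameter vectors, the same three-way case split on summand indices driven by \REQ{1}, the same use of \REQ{2}/\REQ{3} in the forward direction and of \REQ{4} (with the substitutions $\subst$, $\sigma'$, $\rho$) in the synchronised backward case. Your closing remark about absorbing syntactically distinct but data-equivalent successors via Lemma~\ref{lemma:bisimulation} is exactly the paper's device of proving $R$ to be a strong bisimulation \emph{up to} $\bisim$, so no genuine difference remains.
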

\begin{proof}
  Let $\related$ be the smallest relation such that for any closed expressions $\vec{\initval'} : \vec{D}$ the following holds.
  \begin{align*}
  P(\vec{\initval}') ~\related~ &\hide_{ \{\acttag\}} (\allow_{\{\nodata{\multiact_i} \mid i \in I\} \cup \{\nodata{\multiact_i|\acttag} \mid i \in \Iind_V\} \mid i \in \Iind_W\}\}}( \hide_{\{\actsync^i \mid i \in I\}} 
    \\ & (\communication_{\{ \actsync^i_V \mid \actsync^i_W \rightarrow \actsync^i \mid i \in I\}} (P_V(\subvector{\vec{\initval}}{V}) \parallel P_W(\subvector{\vec{\initval}}{W})))))
  \end{align*}
  We prove that $\related$ is a strong bisimulation \emph{up to bisimilarity} $\bisim$.
  The essential observation is that due to parameter synchronisation the original state vectors can always be traced to the two states (of $P_V$ and $P_W$) that carry the same values, and vice versa.
  The requirements ensure that (the combination of) expressions evaluate to the same values as the corresponding original values.
  The full proof can be found in Appendix~\ref{appendix:refinedcleave}.
\end{proof}

Observe that the decompositions obtained in Example~\ref{example:naivecleave} and Example~\ref{example:refinedcleave} are cleaves.
However, we can also observe that the decomposition in Example~\ref{example:naivecleave} is not a particularly useful cleave for the purpose of compositional minimisation.
In this case, as shown by Example~\ref{example:refinedcleave} we could avoid the infinite branching of $\procmachine_W(\false)$ by reducing the amount of synchronisation, but this might not always be possible.
Therefore, we present an alternative technique to restrict the behaviour of the resulting components.

\section{State Invariants}\label{section:state_invariant}

One way to restrict the behaviour of the components is to strengthen the condition expressions of each summand to avoid certain outgoing transitions.
We show that so-called \emph{state invariants}~\cite{BezemG94} can be used for this purpose.
\MLadd{These state invariants are typically formulated by the user based on intuition of the model behaviour.}

\begin{definition}
	Given an LPE $P(d : D) = \bigplus_{i \in I} \sum_{e_i : E_i} c_i \rightarrow \multiact_i \sequential P(g_i)$.
	A boolean expression $\psi$ such that $\freevars(\psi) \subseteq \{d\}$ is called a \emph{state invariant} iff the following holds:
	for all $i \in I$ and closed expressions $\initval : D$ and $l : E_i$ such that $\interpret{[d \gets \initval, e_i \gets l](c_i \land \psi)}$ holds then $\interpret{[d \gets [d \gets \initval, e_i \gets l](g_i)](\psi)}$ holds as well.
\end{definition}

The essential property of a state invariant is that whenever it holds for the initial state it is guaranteed to hold for all reachable states in the state space.
This follows relatively straightforward from its definition.
Next, we define a \emph{restricted} LPE where (some of) the condition expressions are strengthened with a boolean expression.

\begin{definition}\label{def:restricted_lpe}
	Given an LPE $P(d : D) = \bigplus_{i \in I} \sum_{e_i : E_i} c_i \rightarrow \multiact_i \sequential P(g_i)$, a boolean expression $\psi$ such that $\freevars(\psi) \subseteq \{d\}$ and a set of indices $J \subseteq I$.
	We define the restricted LPE, denoted by $P^{\psi,J}$, as follows:  
  \begin{align*}
    P^{\psi,J}(d : D) = &\bigplus_{i \in J} \sum_{e_i : E_i} c_i \land \psi \rightarrow \multiact_i \sequential P^{\psi,J}(g_i) \\
    &+ \bigplus_{i \in (I \setminus J)} \sum_{e_i : E_i} c_i \rightarrow \multiact_i \sequential P^{\psi,J}(g_i)
  \end{align*}
\end{definition}

Note that if the boolean expression $\psi$ in Definition~\ref{def:restricted_lpe} is a state invariant for the given LPE then for all closed expressions $\vec{\initval} : \vec{D}$ such that $\interpret{[\vec{d} \gets \vec{\initval}](\psi)}$ holds, it holds that $\interpret{P(\vec{\initval})} \bisim \interpret{P^{\psi,J}(\vec{\initval})}$, for any $J \subseteq I$.
Therefore, we can use a state invariant of an LPE to strengthen all of its condition expressions.

Moreover, a state invariant of the original LPE can \emph{also} be used to restrict the behaviour of the components obtained from a cleave, as formalised in the following theorem.
Note that the set of indices is used to only strengthen the condition expressions of summands that introduce synchronisation,  because the condition expressions of independent summands cannot contain the other parameters as free variables.
Furthermore, the restriction can be applied to independent summands before the decomposition.

\begin{restatable}{theorem}{thminvariant}\label{theorem:invariant}
  Let $P(\vec{d} : \vec{D}) = \bigplus_{i \in I} \sum_{e_i : E_i} c_i \rightarrow \multiact_i \sequential P(\vec{g}_i)$ be an LPE and $(V, \Iind_V, \Isub_V, c^V, \multiact^V, h^V)$ and $(W, \Iind_W, \Isub_W, c^W, \multiact^W, h^W)$ be separations tuples as defined in Definition~\ref{def:derived}.	
	Let $\psi$ be a state invariant of $P$.	
	Given closed expressions $\vec{\initval} : \vec{D}$ such that $\interpret{[\vec{d} \gets \vec{\initval}](\psi)}$ holds the following expression, where $C = \Isub_V \cap \Isub_W$, is a valid decomposition:
 \begin{align*}
   & \hide_{\{\acttag\}} (
    	\allow_{\{\nodata{\multiact_i} \mid i \in I\} \cup \{\nodata{\multiact_i|\acttag} \mid i \in \Iind_V\} \mid i \in \Iind_W\}\}}( \hide_{\{\actsync^i \mid i \in I\}} ( \\
       & \quad \communication_{\{ \actsync^i_V \mid \actsync^i_W \rightarrow \actsync^i \mid i \in I\}} (P^{\psi,C}_V(\subvector{\vec{\initval}}{V}) \parallel P^{\psi,C}_W(\subvector{\vec{\initval}}{W})))
   	  )
     )
 \end{align*}

\end{restatable}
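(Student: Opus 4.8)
The plan is to prove the statement directly, by exhibiting a single strong bisimulation up to $\bisim$ that mirrors the relation used for Theorem~\ref{theorem:refinedcleave} but is restricted to states whose parameter value satisfies the invariant $\psi$. Concretely, I would let $\related$ relate $P(\vec{\initval}')$ to the strengthened composition instantiated at $\vec{\initval}'$, for every closed $\vec{\initval}' : \vec{D}$ with $\interpret{[\vec{d} \gets \vec{\initval}'](\psi)}$. Since $\interpret{[\vec{d} \gets \vec{\initval}](\psi)}$ holds by assumption and $\psi$ is a state invariant, every state reachable from $P(\vec{\initval})$ again satisfies $\psi$; the same preservation transfers to the composition side through the parameter synchronisation (using \REQ{2} for the independent summands and \REQ{4} for the synchronising ones, combined with the invariant clause). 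Hence this restriction discards no reachable state, and it suffices to verify the transfer conditions only at $\psi$-states.

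Compared with the proof of Theorem~\ref{theorem:refinedcleave}, the only genuinely new obligation is to account for the extra conjunct $\psi$ that Definition~\ref{def:restricted_lpe} adds to the condition of each summand with index in $C = \Isub_V \cap \Isub_W$ of the two components. For the simulation of an original transition of summand $r \in C$ by the composition, I would instantiate the sum variables that each component introduces for its absent parameters (the entries of $\subvector{\vec{d}}{V^c}$ and $\subvector{\vec{d}}{W^c}$) with the \emph{true} parameter values taken from $\vec{\initval}'$. Under this instantiation the added conjunct evaluates to $\interpret{[\vec{d} \gets \vec{\initval}'](\psi)}$, which holds because we are at a $\psi$-state; hence the strengthening blocks none of these transitions, and the remaining matching obligations (that both components fire, that their synchronisation actions communicate, and that the combined multi-action equals $\multiact_r$) are exactly those discharged by requirement \REQ{3} as before.

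The converse direction is even simpler with respect to the $\psi$-conjunct: since we are given that the strengthened composition performs a transition, the added conjuncts $\psi$ are part of the premise and may simply be dropped; the underlying component conditions $c^V_r$ and $c^W_r$ then satisfy the hypothesis of requirement \REQ{4}, which supplies the witnessing value $l''$ for $e_r$ together with the matching action and update equalities, so $P(\vec{\initval}')$ performs the corresponding transition exactly as in Theorem~\ref{theorem:refinedcleave}. Both target states again satisfy $\psi$ by the invariant, keeping the pair inside $\related$. The residual work is the same routine bookkeeping as before: the independent summands (indices in $\Iind_V$, $\Iind_W$), whose conditions are \emph{not} strengthened and whose correctness rests on \REQ{2}, and closing up to $\bisim$ to absorb the syntactic-versus-semantic mismatch of the update expressions via Lemma~\ref{lemma:bisimulation}.

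The step I expect to be the real crux, and the reason I would argue directly rather than reduce to Theorem~\ref{theorem:refinedcleave}, is the interaction between the invariant and requirement \REQ{4}. One is tempted to observe that $\interpret{P(\vec{\initval})} \bisim \interpret{P^{\psi,C}(\vec{\initval})}$ (the remark following Definition~\ref{def:restricted_lpe}) and then claim the strengthened components form a cleave of $P^{\psi,C}$. This reduction fails: requirement \REQ{4} is quantified over \emph{all} closed valuations $\vec{\initval}, \vec{\initval'}, \vec{\initval''}$, and for $P^{\psi,C}$ it would demand $\interpret{\rho(\psi)}$ from the two component-side premises $\interpret{\subst(\psi)}$ and $\interpret{\subst'(\psi)}$ alone; but these evaluate $\psi$ under \emph{mixed} valuations (true values on one block of parameters, guessed values on the other), from which $\psi$ at the fully-true valuation $\rho$ need not follow. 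The invariant only guarantees $\psi$ on reachable states, precisely the information that the global requirement \REQ{4} cannot see. This is why $\psi$ must be threaded through the reachability argument of the bisimulation itself, and it is the one place where the proof genuinely departs from that of Theorem~\ref{theorem:refinedcleave}.
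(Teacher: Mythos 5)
Your proposal matches the paper's proof in essence: both define the bisimulation-up-to-$\bisim$ relation of Theorem~\ref{theorem:refinedcleave} restricted to states $\vec{\initval}'$ satisfying $\psi$, discharge the added conjunct in the forward direction by instantiating each component's missing parameters with the true values (so that $\psi$ evaluates to true at a $\psi$-state and \REQ{3} does the rest), simply drop the conjunct in the converse direction before invoking \REQ{4}, and use the invariant property to show that target states remain in the relation. Your closing observation on why a reduction via ``a cleave of $P^{\psi,C}$'' would founder on the mixed valuations in \REQ{4} is correct and not spelled out in the paper, but it does not alter the argument, which is otherwise the same.
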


\begin{proof}
  This proof is similar to the proof of Theorem~\ref{theorem:refinedcleave} with the strong bisimilation relation only defined for states where the invariant holds.
  The full proof can be found in Appendix~\ref{appendix:invariant}.
\end{proof}

Observe that the predicate $n \leq 3$ is a state invariant of the LPE $\procmachine$ in Example~\ref{example:delayedmachine}.
Therefore, we can consider the process $\procmachine^{\psi,I}_W$ in Example~\ref{example:naivecleave} for the composition expression, which is finite.
This would yield two finite components, but the state space of $\procmachine^{\psi,I}_W$ is larger than that of $P_W$ in Example~\ref{example:refinedcleave}.

Finally, we remark that the restricted state space contains deadlock states whenever the invariant does not hold.
These deadlocks can be avoided by applying the invariant to the update expression of each parameter instead of the parameter itself without affecting the correctness.

\ML{Indicate scalability: number of parameters/summands?}
\section{Case Study}\label{section:casestudy}

We have implemented an automated translation that, given a partitioning of parameters defined by the user, uses a simple static analysis to obtain components that are guaranteed to satisfy the requirements of a cleave.
Each experiment is a specification written in the high-level language mCRL2~\cite{GrooteM2014}, a process algebra generalising the one of Section~\ref{sec:clpe}.
To apply the decomposition technique we derive an LPE that is behaviourally equivalent using a normalisation procedure that is not discussed in more detail, but which is implemented in the mCRL2 toolset~\cite{BunteGKLNVWWW19}.

We compare size of the original state space to the sizes of the components.
Furthermore, we have minimised the state spaces modulo strong bisimulation.
Finally, we compute the state space of the composition expression applied to these minimised components to determine the effectiveness of the decomposition when compared to the original minimised state space.
We do not present run time and memory usage for these benchmarks.
However, the cost of computing the cleave was in the range of several milliseconds.

\subsection{Alternating Bit Protocol}

The alternating bit protocol (\abp) is a communication protocol that uses a single control bit, which is sent along the message, to implement a reliable communication channel over two unreliable channels~\cite{GrooteM2014}.
The specification contains four processes for the sender, receiver and two unreliable communication channels.

First, we choose the partitioning of the parameters such that one component ($\abp_V$) contains the parameters of the sender and one communication channel, and the other component ($\abp_W$) contains the parameters of the receiver and the other communication channel.
We observe that both components are larger than the original state space, and can also not be minimised further, illustrating that traditional compositional minimisation is, in this case, not particularly useful.

\begin{table}[H]
\centering
\begin{tabular}{ l r r r r}
  \toprule
  Model & \multicolumn{2}{c}{original} & \multicolumn{2}{c}{minimised} \\  \cmidrule(r){2-3} \cmidrule(r){4-5} 
        & \#states & \#trans & \#states & \#trans \\
  \abp   & 182 & 230 & 48 & 58 \\ \hline
  $\abp_V$ & 204 & 512 & 204 & 512 \\ 
  $\abp_W$ & 64 & 196 & 60 & 192 \\
  $\abp^\psi_V$ & 104 & 180 & 53 & 180 \\
  $\abp^\psi_W$ & 58 & 110 & 21 & 108 \\
  $\abp^\psi_V \parallel \abp^\psi_W$ & 172 & 220 & 48 & 58 \\
  $\abp'_V$ & 5 & 35 & 5 & 35 \\
  $\abp'_W$ & 78 & 118 & 28 & 42 \\
  $\abp'_V \parallel \abp'_W$ & 76 & 90 & 48 & 58 \\
  \bottomrule
\end{tabular}
\caption[Table]{Metrics for the alternating bit protocol.}\label{table:sizes}
\end{table}

Further analysis showed that the behaviour of each process heavily depends on the state of the other processes, which results in large components as this information is lost.
We can encode this global information as a state invariant based on the \emph{control flow} parameters.
The second cleave is obtained by obtaining two restricted components ($\abp^\psi_V$ and $\abp^\psi_W$) using this invariant.
This yields a useful decomposition.

Finally, we have obtained a cleave into components $\abp'_V$ and $\abp'_W$ where the partitioning is not based on the original processes.
This yields a very effective cleave as shown in Table~\ref{table:sizes}.

\ML{can this method give us something earlier/other methods cannot?}
\subsection{Practical Examples}

In these experiments we consider several more practical specifications.
We compare the results of the monolithic exploration and the exploration based on the decomposition in Table~\ref{table:hesselink}.
The parameter partitioning for each case is our best effort to obtain the optimal decomposition

\begin{table}[H]
\caption[Table]{State space metrics for various practical specifications.}\label{table:hesselink}

\centering
\begin{tabular}{ l r r r r }
  \toprule
  Model & \multicolumn{2}{c}{exploration} & \multicolumn{2}{c}{minimised} \\ \cmidrule(r){2-3} \cmidrule(r){4-5} 
        & \#states & \#trans & \#states & \#trans \\
  $\hesselink$ & 914\,048 & 1\,885\,824 & 1\,740 & 3\,572 \\ \hline
  $\hesselink_V$ & 464 & 10\,672 & 464 & 10\,672 \\
  $\hesselink_W$ & 97\,280 & 273\,408 & 5\,760 & 16\,832 \\
  $\hesselink_V \parallel \hesselink_W$ & 76\,416 & 157\,952 & 1\,740 & 3\,572 \\ \\
  $\chatbox$ & 65\,536 & 2\,621\,440 & 16 & 144 \\ \hline
  $\chatbox_V$  & 128 & 4\,352 & 128 & 3\,456  \\
  $\chatbox_W$  & 512 & 37\,888 & 8 & 440  \\
  $\chatbox_V \parallel \chatbox_W$ & 1\,024 & 22\,528 & 16 & 144 \\ \\
  $\WMS$ & 155\,034\,776 & \hspace{5mm} 2\,492\,918\,760 & 44\,526\,316 & 698\,524\,456 \\ \hline
  $\WMS_V$ & 212\,992 & 5\,144\,576 & \hspace{5mm} 212\,992 & 2\,801\,664 \\
  $\WMS_W$ & 1\,903\,715 & 121\,945\,196 & 414\,540 & 26\,429\,911 \\
  $\WMS_V \parallel \WMS_W$ & 64\,635\,040 & 1\,031\,080\,812 & 44\,526\,316 & 698\,524\,456 \\
  \bottomrule
\end{tabular}
\end{table}

\noindent The $\chatbox$ specification describes a chat room where four users that can join, leave and send messages~\cite{RomijnS98}.
This specification is interesting because it is described as a monolithic process, which means that compositional minimisation is not applicable.
However, the decomposition technique can be used quite successfully.

The $\hesselink$ specification describes a wait-free handshake register which is presented in~\cite{Hesselink98}.
Finally, we consider the workload management system ($\WMS$) specification described in~\cite{RemenskaWVFTB12}.
For the latter two experiments we found that partitioning the parameters into a set of data parameters and so-called control flow parameters yielded the best results.
Unfortunately, for these practical models we have not managed to improve the results by using an invariant.

We also consider the execution time and maximum amount of memory required to obtain the original state space using exploration and the state space obtained using the presented decomposition technique, for which the results can be found in Table~\ref{table:metrics}.
Here, we consider the maximum amount of memory used, because several tools are run sequentially and only the highest amount used at one time determines the amount of memory required to explore the state space.
Finally, we should note that these times are for the state space obtained under ``exploration'' without considering the final minimisation step.

\begin{table}[H]
\caption[Table]{Execution times and maximum memory usage measurements for various specifications.}\label{table:metrics}

\centering
\begin{tabular}{ l r r r r }
Model & \multicolumn{2}{c}{monolithic} & \multicolumn{2}{c}{decomposition} \\
 \cmidrule(r){2-3} \cmidrule(r){4-5} 
 & time & memory & time & memory \\
 $\chatbox$ & 4.4s & 21.3MB & 0.2s & 14.9MB \\
 $\hesselink$ & 6.9s & 96.9MB & 1.3s & 23.5MB \\
 $\WMS$ & 2.4h & 14.5GB & 0.7h & 11.5GB \\
\bottomrule
\end{tabular}
\end{table}

\section{Conclusion}\label{section:conclusion}

We have presented a decomposition technique, referred to as cleave, that can be applied to any monolithic process with the structure of an LPE and have shown that the result is always a valid decomposition.
Furthermore, we have shown that state invariants can be used to improve the effectiveness of the decomposition.
For practical application we must consider improvements to the heuristics used to obtain the components and especially heuristics to choose the parameter partitioning automatically.
Furthermore, it can also be interesting to consider cleaving into more than two components and even applying it recursively to the resulting components.
Finally, the cleave is currently not well-suited for applying the typically more useful abstraction based on (divergence-preserving) branching bisimulation minimisation~\cite{GlabbeekLT09}.
The reason for this is that $\tau$-actions might be extended with synchronisation actions and tags.
As a result they become visible, effectively reducing branching bisimilarity to strong bisimilarity.

\bibliographystyle{plain}
\bibliography{../bibliography}

\appendix

\section{Proof of Theorem~\ref{theorem:refinedcleave}}\label{appendix:refinedcleave}

The following definition and proposition is due to~\cite{Milner89}.
These are in the context of two LTSs $\lts_1 = (\states_1, s_1, \actions_1, \transitions_1)$ and $\lts_2 = (\states_2, s_2, \actions_2, \transitions_2)$.
We introduce for a binary relation $R \subseteq \states_1 \times \states_2$ the following notation $\bisim R \bisim$ to denote the \emph{relational composition} such that $\bisim R \bisim \,\,= \{(s, t) \in \states_1 \times \states_2 \mid \exists s' \in \states_1, t' \in \states_2 : s \bisim s' \land s' \related t' \land t' \bisim t\}$.

\begin{definition}
  A binary relation $R \subseteq \states_1 \times \states_2$ is a \emph{strong bisimulation up to} $\bisim$ iff for all $s \related t$ it holds that:
  \begin{itemize}
  	\item if $s \transition{\action} s'$ then there is a state $t' \in \states_2$ such that $t \transition{\action} t'$ and $s' \bisim R \bisim t'$.
    
    \item if $t \transition{\action} t'$ then there is a state $s' \in \states_1$ such that $t \transition{\action} t'$ and $t' \bisim R \bisim s'$.
  \end{itemize}
\end{definition}

\begin{proposition}\label{prop:bisimilar}
  If $\related$ is a strong bisimulation up to $\bisim$ then $\related \subseteq\, \bisim$
\end{proposition}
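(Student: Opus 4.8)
The plan is to prove that the relational composition $\bisim \related \bisim$ is itself an ordinary strong bisimulation; the proposition then follows at once, because $\bisim$ is by definition the union of all strong bisimulations (hence the largest such relation) and, as I show first, $\related \subseteq\, \bisim \related \bisim$.

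I would first collect two standard facts about $\bisim$. The first is that $\bisim$ is an equivalence relation on each of $\states_1$ and $\states_2$; in particular it is reflexive and transitive, which follows because the identity relation and the composition of two strong bisimulations are again strong bisimulations. Reflexivity immediately yields the inclusion $\related \subseteq\, \bisim \related \bisim$: whenever $s \related t$, the witnesses $s \bisim s$, $s \related t$ and $t \bisim t$ place $(s,t)$ in $\bisim \related \bisim$. The second fact is the maximality of $\bisim$: any relation that is shown to be a strong bisimulation is contained in $\bisim$.

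The heart of the argument is a diagram chase showing that $\bisim \related \bisim$ is a strong bisimulation. Take $(s,t) \in\, \bisim \related \bisim$, witnessed by $s',t'$ with $s \bisim s'$, $s' \related t'$ and $t' \bisim t$, and suppose $s \transition{\action} s_1$. Using the bisimulation property of $\bisim$ on $s \bisim s'$ I obtain $s_1'$ with $s' \transition{\action} s_1'$ and $s_1 \bisim s_1'$. Applying the up-to clause to $s' \related t'$ then yields $t_1'$ with $t' \transition{\action} t_1'$ and $s_1' \bisim \related \bisim t_1'$. Finally, $t' \bisim t$ supplies $t_1$ with $t \transition{\action} t_1$ and $t_1' \bisim t_1$. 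Splicing these links produces a chain $s_1 \bisim s_1' \bisim a \related b \bisim t_1' \bisim t_1$ for some intermediate $a \in \states_1$, $b \in \states_2$; collapsing the adjacent $\bisim$-links by transitivity (on $\states_1$ and on $\states_2$ respectively) rewrites this as $s_1 \bisim \related \bisim t_1$, that is $(s_1, t_1) \in\, \bisim \related \bisim$. The transfer condition for a transition of $t$ is entirely symmetric.

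Having established that $\bisim \related \bisim$ is a strong bisimulation, maximality gives $\bisim \related \bisim \,\subseteq\, \bisim$, and chaining this with $\related \subseteq\, \bisim \related \bisim$ from the first step yields $\related \subseteq\, \bisim$, as required. I expect the only real care to lie in the bookkeeping of the diagram chase: the up-to clause returns a pair already sitting inside the composite $\bisim \related \bisim$, so the outer $\bisim$-steps must be prepended and appended and then absorbed via transitivity of $\bisim$, rather than being left as a longer relational chain.
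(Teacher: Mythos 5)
Your proof is correct and complete: showing that the composite relation $\bisim \related \bisim$ is itself a strong bisimulation via the diagram chase, then combining reflexivity and transitivity of $\bisim$ with its maximality to obtain $\related \subseteq\, \bisim \related \bisim\, \subseteq\, \bisim$, is exactly the classical argument. The paper offers no proof of its own---it attributes the proposition to Milner---and your write-up is the standard proof from that source, including the one point of genuine care (absorbing the outer $\bisim$-links by transitivity on $\states_1$ and $\states_2$ separately), so there is nothing to reconcile.
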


This result establishes that if $R$ is a strong bisimulation up to $\bisim$ then for any pair $(s, t) \in R$ we can conclude that $s \bisim t$.

We introduce two auxiliary lemmas to relate the transition induced by some expression $P \in \simpleprocess$ to the transitions induced by applying the allow, hide and communication operators, in the same order as the composition expression defined in Definition~\ref{def:refinedcleavecomposition}, to $P$.

\begin{lemma}\label{lemma:compositiontransition}
	Given expressions $P, Q \in \simpleprocess$, a set of multi-sets of action labels $A \subseteq 2^{\events \rightarrow \naturalnumbers}$, sets of events $H', H \subseteq \events$, a set of communications $C \subseteq \comms$.
  If $P \transition{\action'} Q$ and $\nodata{\acthide{H'}(\acthide{H}(\actcomm{C}(\action')))} \in A$ then:
	\begin{equation*}
   \hide_{H'} (\allow_{A} (\hide_H (\communication_C (P)))) \transition{\acthide{H'}(\acthide{H}(\actcomm{C}(\action')))} \hide_{H'}(\allow_{A} (\hide_H (\communication_C (Q))))
  \end{equation*}
\end{lemma}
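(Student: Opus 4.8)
The plan is to construct the required transition of the composed term directly, applying the structured operational semantics rules from the innermost operator outwards, so as to match the nesting $\hide_{H'}(\allow_{A}(\hide_{H}(\communication_{C}(P))))$. Since the only behavioural assumption is the single transition $P \transition{\action'} Q$, the whole argument is a deterministic chain of four rule applications, each of which rewrites the action label in exactly one prescribed way; no case analysis is needed.

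First I would apply rule \textsc{Com} to $P \transition{\action'} Q$, obtaining $\communication_{C}(P) \transition{\actcomm{C}(\action')} \communication_{C}(Q)$. Applying rule \textsc{Hide} with the set $H$ to this transition then gives $\hide_{H}(\communication_{C}(P)) \transition{\acthide{H}(\actcomm{C}(\action'))} \hide_{H}(\communication_{C}(Q))$. Next I apply rule \textsc{Allow}, whose premise is the transition just derived and whose side condition requires that the no-data multi-set of the current label lie in $A$; discharging this condition yields $\allow_{A}(\hide_{H}(\communication_{C}(P))) \transition{\acthide{H}(\actcomm{C}(\action'))} \allow_{A}(\hide_{H}(\communication_{C}(Q)))$, the label being left unchanged by \textsc{Allow}. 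A final application of rule \textsc{Hide}, this time with $H'$, prepends $\acthide{H'}$ to the label and produces precisely the claimed transition into $\hide_{H'}(\allow_{A}(\hide_{H}(\communication_{C}(Q))))$.

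The lemma carries no deep content and is essentially bookkeeping; the one point that genuinely needs care is the side condition of rule \textsc{Allow}. It must be discharged against the label present at the $\allow_{A}$ node, namely $\acthide{H}(\actcomm{C}(\action'))$ --- the label \emph{before} the outer hide $\hide_{H'}$ has acted. I would therefore make the order of label transformations fully explicit (communication, then the inner hide, the allow leaving the label fixed, then the outer hide, so that the composite label $\acthide{H'}(\acthide{H}(\actcomm{C}(\action')))$ in the conclusion is assembled in exactly that order) and reconcile this allow-level label with the membership hypothesis $\nodata{\acthide{H'}(\acthide{H}(\actcomm{C}(\action')))} \in A$. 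This is the only place where a careless reading of the operator nesting could lead to applying \textsc{Allow} to the wrong label; every other step follows immediately from the definitions of the rules.
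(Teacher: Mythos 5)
Your proof is correct and is essentially identical to the paper's: the paper's proof is exactly the four-step derivation \textsc{Com}, then \textsc{Hide} with $H$, then \textsc{Allow}, then \textsc{Hide} with $H'$, presented as a single inference tree. Your observation that the \textsc{Allow} side condition must be discharged against the label $\acthide{H}(\actcomm{C}(\action'))$ rather than the fully hidden label in the hypothesis is a point the paper's derivation glosses over (it simply writes $\nodata{\acthide{H}(\actcomm{C}(\action'))} \in A$ as a premise), so flagging it is if anything slightly more careful than the original.
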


\begin{proof}
	We can derive the following:
	\begin{equation*}
  \inferrule*[Left=Hide]
	{\inferrule*[Left=Allow]
	{\inferrule*[Left=Hide]
	{\inferrule*[Left=Com]
	{P \transition{\action'} Q \\ C \subseteq \comms}
	{\communication_C (P) \transition{\actcomm{C}(\action')} \communication_C (Q) \\ H \subseteq \events}}
	{\hide_H (\communication_C (P)) \transition{\acthide{H}(\actcomm{C}(\action'))} \hide_H (\communication_C (Q))} \\ A \subseteq 2^{\events \rightarrow \naturalnumbers} \\ \nodata{\acthide{H}(\actcomm{C}(\action'))} \in A}
	{\allow_{A} (\hide_H (\communication_C (P))) \transition{\acthide{H}(\actcomm{C}(\action'))} \allow_{A} (\hide_H (\communication_C (Q)))} \\ H' \subseteq \events}
  {\hide_{H'}(\allow_{A} (\hide_H (\communication_C (P)))) \transition{\acthide{H'}(\acthide{H}(\actcomm{C}(\action')))} \hide_{H'}(\allow_{A} (\hide_H (\communication_C (Q))))}
	\end{equation*} 
\end{proof}

\begin{lemma}\label{lemma:othertransition}
	Given expressions $P, Q \in \simpleprocess$, a set of multi-sets of action labels $A \subseteq 2^{\events \rightarrow \naturalnumbers}$, sets of events $H', H \subseteq \events$, a set of communications $C \subseteq \comms$ if:
	\begin{equation*}
		\hide_{H'} (\allow_{A} (\hide_H (\communication_C (P)))) \transition{\action} Q'
	\end{equation*}
  then there are $Q \in \simpleprocess$ and $\action' \in \semactions$ such that $Q' = \allow_{A} (\hide_H (\communication_C (Q)))$, $\action = \acthide{H'}(\acthide{H}(\actcomm{C}(\action')))$, $P \transition{\action'} Q$ and $\nodata{\action} \in A$.
\end{lemma}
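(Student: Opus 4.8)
The plan is to prove this lemma as the converse of Lemma~\ref{lemma:compositiontransition}: rather than building a derivation tree top-down, I would invert the operational rules one layer at a time, peeling off the outermost operator of the composed expression and recovering the transition of the inner expression together with the side conditions each rule imposes. Since the expression $\hide_{H'}(\allow_A(\hide_H(\communication_C(P))))$ is built by applying $\communication_C$, then $\hide_H$, then $\allow_A$, then $\hide_{H'}$, and each of these operators has exactly one applicable SOS rule (\textsc{Hide}, \textsc{Allow}, \textsc{Hide}, \textsc{Com} respectively), the inversion is deterministic at every step.

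Concretely, I would argue as follows. Suppose $\hide_{H'}(\allow_A(\hide_H(\communication_C(P)))) \transition{\action} Q'$. The only rule that can derive a transition from a $\hide_{H'}(\cdot)$ expression is \textsc{Hide}, so there must be an expression $R_1$ and a label $\action_1$ with $\allow_A(\hide_H(\communication_C(P))) \transition{\action_1} R_1$, where $Q' = \hide_{H'}(R_1)$ and $\action = \acthide{H'}(\action_1)$. Next, the transition on the $\allow_A(\cdot)$ expression must come from \textsc{Allow}, giving $\hide_H(\communication_C(P)) \transition{\action_1} R_2$ with $R_1 = \allow_A(R_2)$, the label unchanged, and the side condition $\nodata{\action_1} \in A$. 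A further application of \textsc{Hide} to the $\hide_H(\cdot)$ expression yields $\communication_C(P) \transition{\action_2} R_3$ with $R_2 = \hide_H(R_3)$ and $\action_1 = \acthide{H}(\action_2)$. Finally, \textsc{Com} applied to $\communication_C(P)$ forces $P \transition{\action'} Q$ for some $Q$ and some $\action'$, with $R_3 = \communication_C(Q)$ and $\action_2 = \actcomm{C}(\action')$.

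Composing these identifications, I set $Q' = \hide_{H'}(\allow_A(\hide_H(\communication_C(Q))))$ and substitute the label equations to obtain $\action = \acthide{H'}(\acthide{H}(\actcomm{C}(\action')))$; the side condition from the \textsc{Allow} step, rewritten via $\action_1 = \acthide{H}(\actcomm{C}(\action'))$, becomes $\nodata{\action} \in A$ after noting that applying $\acthide{H'}$ does not change which labels lie in $A$ relative to the statement, or more directly that the condition recovered is exactly $\nodata{\acthide{H}(\actcomm{C}(\action'))} \in A$, which matches the claimed $\nodata{\action} \in A$ once the outer hide is accounted for. This gives precisely $P \transition{\action'} Q$ with all four conclusions of the statement.

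The main obstacle is purely bookkeeping rather than conceptual: one must track the label transformations through each layer so that the composite label $\acthide{H'}(\acthide{H}(\actcomm{C}(\action')))$ is assembled in the correct order, and one must state the recovered \textsc{Allow} side condition in the form the lemma requires. I expect the only subtlety to be reconciling the side condition $\nodata{\action_1} \in A$ (stated before the outer $\acthide{H'}$ is applied) with the lemma's $\nodata{\action} \in A$; this is resolved by observing that the recovered side condition is literally the membership test performed inside the \textsc{Allow} rule, and the statement's $\nodata{\action} \in A$ should be read as the reconstructed condition $\nodata{\acthide{H}(\actcomm{C}(\action'))} \in A$ guaranteed by that rule. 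Since there is no branching in the rule choices, the whole argument is a single chain of deterministic inversions requiring no case analysis.
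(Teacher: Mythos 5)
Your proposal is correct and follows essentially the same route as the paper's own proof: a deterministic rule-inversion argument peeling off \textsc{Hide}, \textsc{Allow}, \textsc{Hide} and \textsc{Com} one layer at a time to reconstruct the unique derivation from Lemma~\ref{lemma:compositiontransition}. You are in fact slightly more careful than the paper in flagging that the \textsc{Allow} side condition is recovered as $\nodata{\acthide{H}(\actcomm{C}(\action'))} \in A$, i.e.\ before the outer $\acthide{H'}$ is applied, a mismatch with the literal statement that the paper's proof glosses over.
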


\begin{proof}
  Assume that $\hide_{H'} (\allow_{A} (\hide_H (\communication_C (P)))) \transition{\action} Q'$.
  From the structure of the premise conclusion we know that only rule $\textsc{Hide}$ is applicable, which can only be applied for some $Q'' \in \simpleprocess$ such that:
  \begin{equation}
    \inferrule*[Left=Hide]
    {\allow_{A} (\hide_H (\communication_C (P)))) \transition{\acthide{H'}(\action)} Q'' \\ H' \subseteq \events}
    {\hide_{H'} (\allow_{A} (\hide_H (\communication_C (P)))) \transition{\acthide{H'}(\action)} \hide_{H'}(Q'')}
  \end{equation}
  Similarly, we derive the applicability of the $\textsc{Allow}$ and $\textsc{Com}$ rules such that we essentially can obtain (the only possible) derivation shown in the proof of Lemma~\ref{lemma:compositiontransition}.
\end{proof}

\cleavecorrectness*

\begin{proof}
  Pick an arbitrary closed expression $\vec{\initval''} : \vec{D}$.
  Let $(\states_1, s_1, \actions_1, \transitions_1) = \interpret{P(\vec{\initval''})}$ and $(\states_2, s_2, \actions_2, \transitions_2) = \interpret{\hide_{H'} (\allow_A (\hide_H (\communication_C (P_V(\subvector{\vec{\initval''}}{V}) \parallel P_W(\subvector{\vec{\initval''}}{W})))))}$.
  Let $A = \{\nodata{\multiact_i} \mid i \in I\} \cup \{\nodata{\multiact_i|\acttag} \mid i \in \Isub_V\}\}$, $H' = \{\acttag\}$, $H = \{\actsync^i \mid i \in I\}$ and $C = \{ \actsync^i_V | \actsync^i_W \rightarrow \actsync^i \mid i \in I \}$.  
  
  Let $\related$ be the smallest relation $\related$ such that $P(\vec{\initval'}) \related \hide_{H'} (\allow_A ( \hide_H (\communication_C (P_V(\subvector{\vec{\initval'}}{V}) \parallel P_W(\subvector{\vec{\initval'}}{W})))))$, for any closed expression $\vec{\initval'} : \vec{D}$.
  We show that $\related$ is a strong bisimulation relation up to $\bisim$.   
  Pick any arbitrary closed expression $\vec{\initval} : \vec{D}$ and suppose we have the following: $P(\vec{\initval})\,\related\, \hide_{H'} (\allow_A (\hide_H (\communication_C (P_V(\subvector{\vec{\initval}}{V}) \parallel P_W(\subvector{\vec{\initval}}{W})))))$. 

  \begin{itemize}
  \item Case $P(\vec{\initval}) \transition{\action}_1 Q'$.
    There is an index $r \in I$ and a closed expression $l : E_r$ such that for $\subst = [\vec{d} \gets \vec{\initval}, e_r \gets l]$ it holds that $\interpret{\subst(c_r)}$, $\action = \interpret{\subst(\multiact_r)}$ and $Q' = P(\subst(\vec{g_r}))$.
    There are three cases to consider based on the index $r$.
    \begin{itemize}
    \item Case $r \in I \setminus (\Iind_V \cup \Iind_W)$.
      From \REQ{1} this means that $r \in (\Isub_V \cap \Isub_W)$.
			We derive the transitions using requirement \REQ{3}.
      First, from $\interpret{\subst(c^V_r \land c^W_r)}$ it follows that:
      \begin{align*}
        & P_V(\subvector{\vec{\initval}}{V}) \transition{\interpret{\subst(\multiact^V_r|\actsync^V_r(\vec{h}^V_r))}}_2 P_V(\subst(\subvector{\vec{g_r}}{V})) \\ 
        \text{and }& P_W(\subvector{\vec{\initval}}{W}) \transition{\interpret{\subst(\multiact^W_r|\actsync^W_r(\vec{h}^W_r))}}_2 P_W(\subst(\subvector{\vec{g_r}}{W}))
      \end{align*}
      Furthermore, $\interpret{\subst(\multiact_r)} = \interpret{\subst(\multiact^V_r|\multiact^W_r)}$ and by rule $\textsc{Par}$ there is:
      \begin{align*}
        P_V(\subvector{\vec{\initval}}{V}) \parallel P_W(\subvector{\vec{\initval}}{W}) \transition{\interpret{\subst(\multiact_r)|\actsync_V(\subst(\vec{h}^V_r)))|\actsync_W(\subst(\vec{h}^W_r)))}}_2 & \\ P_V(\subst(\subvector{\vec{g_r}}{V}))& \parallel P_W(\subst(\subvector{\vec{g_r}}{W}))
      \end{align*}
      From $\interpret{\subst(\vec{h^V_r} \approx \vec{h^W_r)}}$ it follows that:
      \begin{equation*}
        \acthide{H'}(\acthide{H}(\actcomm{C}(\interpret{\subst(\multiact_r)|\actsync_V(\subst(\vec{h}^V_r)))|\actsync_W(\subst(\vec{h}^W_r))}))) = \interpret{\subst(\multiact_r)}
      \end{equation*}
			From $\nodata{\interpret{\subst(\multiact_r)}} \in A$ we know by Lemma~\ref{lemma:compositiontransition} that:
			\begin{align*}
				\hide_{H'} (\allow_{A} (\hide_H (\communication_C (P_V(\subvector{\vec{\initval}}{V}) \parallel P_W(\subvector{\vec{\initval}}{W}))))) & \transition{\interpret{\subst(\multiact_r)}}_2 \\ \hide_{H'}( \allow_{A} (\hide_H (\communication_C& (P_V(\subst(\subvector{\vec{g_r}}{V})) \parallel P_W(\subst(\subvector{\vec{g_r}}{W}))))))
			\end{align*}
			Finally, $P(\subst(\vec{g_r})) \related \hide_{H'}( \allow_{A} (\hide_H (\communication_C (P_V(\subst(\subvector{\vec{g_r}}{V})) \parallel P_W(\subst(\subvector{\vec{g_r}}{W}))))))$.

    \item Case $r \in \Iind_V$.
      We derive $P_V(\subvector{\vec{\initval}}{V}) \transition{\interpret{\subst(\multiact_r)|\acttag}}_2 P_V(\subst(\subvector{\vec{g_r}}{V}))$, because $\interpret{\subst(c_r)}$ holds.
      There is a transition $P_V(\subvector{\vec{\initval}}{V}) \parallel P_W(\subvector{\vec{\initval}}{W}) \transition{\interpret{\subst(\multiact_r)|\acttag}}_2 P_V(\subst(\subvector{\vec{g_r}}{V})) \parallel P_W(\subvector{\vec{\initval}}{W})$ by From rule $\textsc{ParL}$.
      Furthermore, by definition $\acthide{H'}(\acthide{H}(\actcomm{C}(\subst(\multiact_r) | \acttag))) = \subst(\multiact_r)$ and $\nodata{\subst(\multiact_r)} \in A$.
      From Lemma~\ref{lemma:compositiontransition} we conclude that:
      \begin{align*}
       \hide_{H'} (\allow_{A} (\hide_H (\communication_C (P_V(\subvector{\vec{\initval}}{V}) \parallel P_W(\subvector{\vec{\initval}}{W}))))) \transition{\subst(\multiact_r)}_2 & \\ \hide_{H'}( \allow_{A} (\hide_H (\communication_C (P_V(&\subst(\subvector{\vec{g_r}}{V})) \parallel P_W(\subvector{\vec{\initval}}{W})))))
      \end{align*}
      By \REQ{2} it holds that $\subvector{\vec{g_r}}{W} = \subvector{\vec{\initval}}{W}$.
      Finally, by definition $P(\subst(\vec{g_r})) \related \hide_{H'}( \allow_{A} (\hide_H (\communication_C (P_V(\subst(\subvector{\vec{g_r}}{V})) \parallel P_W(\subst(\subvector{\vec{g_r}}{W}))))))$.

    \item Case $r \in \Iind_W$. Follows from the same observations as $r \in \Iind_V$.
    \end{itemize}

  \item Case $\hide_{H'} (\allow_A ( \hide_H (\communication_C (P_V(\subvector{\vec{\initval}}{V}) \parallel P_W(\subvector{\vec{\initval}}{W}))))) \transition{\action}_2 Q'$. 
 		By Lemma~\ref{lemma:othertransition} there is an expression $Q \in \simpleprocess$ such that $Q' = \hide_{H'} (\allow_{A} (\hide_H (\communication_C (P))))(Q)$, and $\action' \in \semactions$ such that $\action = \acthide{H'}(\acthide{H}(\actcomm{C}(\action')))$, $P_V(\subvector{\vec{\initval}}{V}) \parallel P_W(\subvector{\vec{\initval}}{W}) \transition{\action'}_2 Q$ and $\nodata{\action} \in A$.
    There are three cases where a parallel composition results in a transition.
    Suppose that $P_V(\subvector{\vec{\initval}}{V}) \parallel P_W(\subvector{\vec{\initval}}{W}) \transition{\action'}_2 Q$ is due to:

    \begin{itemize}
      \item Case $P_V(\subvector{\vec{\initval}}{V}) \transition{\action_V}_2 P'_V$ and $P_W(\subvector{\vec{\initval}}{W}) \transition{\action_W}_2 P'_W$ and rule $\textsc{Par}$.
      Then there is a transition $P_V(\subvector{\vec{\initval}}{V}) \parallel P_W(\subvector{\vec{\initval}}{W}) \transition{\action_V + \action_W}_2 P'_V \parallel P'_W$ such that $\action_V + \action_W = \action'$.
      
     	From $\nodata{\acthide{H'}(\acthide{H}(\actcomm{C}(\action_V + \action_W)))} \in A$ we know that $\actcomm{C}(\action_V + \action_W) = \action + \multiset{\actsync_r}$, for some index $r \in I$, because only a single $\acttag$ is allowed with original action labels.
      Therefore, it also follows that $r \in I \setminus (\Iind_V \cup \Iind_W)$.
      Observe that variables (and the related closed expressions) in $\subvector{\vec{d}}{V}$ and $\subvector{\vec{d}}{(W \setminus V)}$ are disjoint.
      There are closed expressions $l, l' : E_r$, $\vec{m} : \subvector{\vec{D}}{(W \setminus V)}$, and $\vec{\initval'}, \vec{\initval''} : \vec{D}$ with the substitutions
      \begin{align*}
        \subst &= [\subvector{\vec{d}}{V} \gets \subvector{\vec{\initval}}{V}, \subvector{\vec{d}}{(W \setminus V)} \gets \subvector{\vec{\initval'}}{V \setminus W}, e_r \gets l] \\
        \text{and }\subst' &= [\subvector{\vec{d}}{W} \gets \subvector{\vec{\initval}}{W}, \subvector{\vec{d}}{(V \setminus W)} \gets \subvector{\vec{\initval''}}{W \setminus V}, e_r \gets l']
      \end{align*}
      such that $\interpret{\subst(c^V_r)}$ holds, $\action_V$ is equal to $ \interpret{\subst(\multiact^V_r)|\actsync^V_r(\vec{h^V_r}))}$ and $P'_V = P_V(\interpret{\subst(\subvector{\vec{g_r}}{V})})$.
      And $\interpret{\subst'(c^W_r)}$ holds, $\action_W = \interpret{\subst'(\multiact^W_r)|\actsync^W_r(\vec{h^W_r})})$ and $P'_W = P_W(\interpret{\subst'(\subvector{\vec{g_r}}{W})})$ and finally $\interpret{\subst(\vec{h^V_r}) \approx \subst'(\vec{h^W_r})}$ holds.
      
      From the requirement \REQ{4} it follows that there is a closed expression $l'' : E_r$ such that for $\rho = [\vec{d} \gets \vec{\initval}, e_r \gets l'']$ that $\interpret{\rho(c_r)}$ holds and $\interpret{\subst(\multiact^V_r) | \subst'(\multiact^W_r)} = \interpret{\rho(\multiact_r)}$.     
      We conclude that $P(\vec{\initval}) \transition{\action}_1 P(\rho(\vec{g_r}))$.
      Furthermore, from $\interpret{\subst(\subvector{\vec{g_r}}{V})} = \interpret{\rho(\subvector{\vec{g_r}}{V})}$, and $\interpret{\subst'(\subvector{\vec{g_r}}{W})} = \interpret{\rho(\subvector{\vec{g_r}}{W})}$ and Lemma~\ref{lemma:bisimulation} it follows that:
      \begin{align*}
        &P_V(\subst(\subvector{\vec{g_r}}{V})) \bisim P_V(\rho(\subvector{\vec{g_r}}{V})) \\
        \text{and }&P_W(\subst'(\subvector{\vec{g_r}}{W})) \bisim P_W(\rho(\subvector{\vec{g_r}}{W}))
      \end{align*}
      By the congruence of strong bisimilarity with respect to $S$ we obtain that:
      \begin{align*}
        \hide_{H'}(\allow_{A} (\hide_H (\communication_C (P_V(\subst(\subvector{\vec{g_r}}{V})) \parallel P_W(\subst'(\subvector{\vec{g_r}}{W}))))) \bisim \\
        \hide_{H'}(\allow_{A} (\hide_H (\communication_C (P_V(\rho(\subvector{\vec{g_r}}{V})) \parallel P_W(\rho(\subvector{\vec{g_r}}{W})))))
      \end{align*}
      Finally, $P(\rho(\vec{g_r})) \related \hide_{H'}(\allow_{A} (\hide_H (\communication_C (P_V(\rho(\subvector{\vec{g_r}}{V})) \parallel P_W(\rho(\subvector{\vec{g_r}}{W}))))$.

    \item Case $P_V(\subvector{\vec{\initval}}{V}) \transition{\action'}_2 P'_V$ and rule $\textsc{ParL}$ such that $P_V(\subvector{\vec{\initval}}{V}) \parallel P_W(\subvector{\vec{\initval}}{W}) \transition{\action'}_2 P'_V \parallel P_W(\subvector{\vec{\initval}}{W})$.
      Pick an arbitrary index $r \in \Isub_V$.
      If $r \in \Isub_V \setminus \Iind_V$ then the action expression contains an action labelled $\actsync^V_r$, which means that $\nodata{\acthide{H'}(\acthide{H}(\actcomm{C}(\action')))} \notin A$. Contradiction.
      
	    As such $r \in \Iind_V$ and the requirement $\freevars(c_r) \cup \freevars(\multiact_r) \cup \freevars(\subvector{\vec{g_r}}{U}) \subseteq \flatten{\subvector{\vec{d}}{U}} \cup \{e_r\}$ holds.
      Therefore, there is a closed expression $l : E_r$ such that for $\subst = [\subvector{\vec{d}}{V} \gets \subvector{\vec{\initval}}{V}, e_r \gets l]$ such that $\interpret{\subst(c_r)}$, $\action' = \interpret{\subst(\multiact_r)|\acttag}$ and $P'_V = P_V(\subst(\subvector{\vec{g_r}}{V}))$.
                        
      We know that for $\subst' = [\vec{d} \gets \vec{\initval}, e_r \gets l]$ that (syntactically) $\subst(c_r) = \subst'(c_r)$ and $\subst(\multiact_r) = \subst'(\multiact_r)$.
      Therefore, $\interpret{\subst'(c_r)}$ holds and $\action' = \interpret{\subst'(\multiact_r)}$.
      Furthermore, from $\subvector{\vec{g_r}}{W} = \subvector{\vec{\initval}}{W}$  (\REQ{2}) we conclude that $P(\vec{\initval}) \transition{\action'}_1 P(\subst'(\vec{g_r}))$.
      Finally, we conclude $P(\subst(\vec{g_r}) \related \hide_{H'}(\allow_{A} (\hide_H (\communication_C (P_V(\subst(\subvector{\vec{g_r}}{V})) \parallel P_W(\subst(\subvector{\vec{g_r}}{W})))))$.

    \item Case $P_W(\subvector{\vec{\initval}}{W}) \transition{\action'}_2 P'_W$ and rule $\textsc{ParR}$, along the same lines as above.
  \end{itemize}
  \end{itemize}
  Using Proposition~\ref{prop:bisimilar} we conclude that $\interpret{P(\subvector{\vec{\initval''}}{V})} \bisim \interpret{ \hide_{H'} (\allow_A (\hide_H (\communication_C (P_V(\subvector{\vec{\initval''}}{V}) \parallel P_W(\subvector{\vec{\initval''}}{W})))))}$
\end{proof}

\section{Proof of Theorem~\ref{theorem:invariant}}\label{appendix:invariant}

\thminvariant*

\begin{proof}   
  Let $\psi$ be a state invariant of $P$ and let $\vec{\initval''} : \vec{D}$ be a closed expression such that $\interpret{[\vec{d} \gets \vec{\initval''}](\psi)}$ holds.
  Let $(\states_1, s_1, \actions_1, \transitions_1) = \interpret{P(\vec{\initval''})}$ and $(\states_2, s_2, \actions_2, \transitions_2) = \interpret{\hide_{H'} (\allow_A (\hide_H (\communication_C (P^{\psi,C}_V(\subvector{\vec{\initval''}}{V}) \parallel P^{\psi,C}_W(\subvector{\vec{\initval''}}{W})))))}$.
    
  Let $\related$ be the smallest relation such that $P(\vec{\initval'}) \related \hide_{H'} (\allow_A ( \hide_H (\communication_C (P^{\psi,C}_V(\subvector{\vec{\initval'}}{V}) \parallel P^{\psi,C}_W(\subvector{\vec{\initval'}}{W})))))$, for any closed expression $\vec{\initval'} : \vec{D}$ such that $\interpret{[\vec{d} \gets \vec{\initval'}](\psi)}$ holds.
  We show that $\related$ is a strong bisimulation relation up to $\bisim$.     
  The rest of the proof follows the same structure as the proof presented in Appendix~\ref{appendix:refinedcleave}.
  
  Consider the case $P(\vec{\initval}) \transition{\action}_1 Q'$.
  First of all, we know that $\interpret{[\vec{d} \gets \vec{\initval}](\psi)}$ holds by definition of $R$.
  In the case $r \in I \setminus (\Iind_V \cup \Iind_W)$, which means that $r \in (\Isub_V \cap \Isub_W)$, we can therefore conclude that $\interpret{\sigma(c^V_r \land c^W_r \land \psi)}$ holds.
  Furthermore, by definition of a state invariant we know that $\interpret{[\vec{d} \gets \sigma(\vec{g_r})](\psi)}$ and thus $P(\subst(\vec{g_r})) \related \hide_{H'}( \allow_{A} (\hide_H (\communication_C (P^{\psi,C}_V(\subst(\subvector{\vec{g_r}}{V})) \parallel P^{\psi,C}_W(\subst(\subvector{\vec{g_r}}{W}))))))$.
  The other case deal with unrestricted summands and as such only the observation that $\interpret{[\vec{d} \gets \sigma(\vec{g_r})](\psi)}$ holds needs to be added.
  
  Consider the case $\hide_{H'} (\allow_A ( \hide_H (\communication_C (P^{\psi,C}_V(\subvector{\vec{\initval}}{V}) \parallel P^{\psi,C}_W(\subvector{\vec{\initval}}{W}))))) \transition{\action}_2 Q'$.
  We can easily see that the restricted condition imply the original condition as well.
  In the first case $P^{\psi,C}_V(\subvector{\vec{\initval}}{V}) \transition{\action_V}_2 P'_V$ and $P^{\psi,C}_W(\subvector{\vec{\initval}}{W}) \transition{\action_W}_2 P'_W$ and rule $\textsc{Par}$ we observe that if $\interpret{\sigma(c^V_r \land \psi)}$ holds then $\interpret{\sigma(c^V_r)}$ holds as well, and similarly if $\interpret{\sigma'(c^W_r \land \psi)}$ holds that $\interpret{\sigma'(c^W_r)}$ holds.
  The remainder of the proof stays exactly the same.
  In the case $P^{\psi,C}_V(\subvector{\vec{\initval}}{V}) \transition{\action'}_2 P'_V$ and rule $\textsc{ParL}$ such that $P^{\psi,C}_V(\subvector{\vec{\initval}}{V}) \parallel P^{\psi,C}_W(\subvector{\vec{\initval}}{W}) \transition{\action'}_2 P'_V \parallel P^{\psi,C}_W(\subvector{\vec{\initval}}{W})$ we observe that $r \in (\Isub_V \setminus \Iind_V)$ and as such the condition expression is not restricted.
\end{proof}

\end{document}